\begin{document}
%
% paper title
% can use linebreaks \\ within to get better formatting as desired
\title{Information Flow Analysis for a Dynamically Typed Functional Language
with Staged Metaprogramming}

% author names and affiliations
% use a multiple column layout for up to two different
% affiliations

\author{\IEEEauthorblockN{Martin Lester\\
Luke Ong}
\IEEEauthorblockA{
Department of Computer Science\\
University of Oxford\\
Oxford, UK\\
\{martin.lester, luke.ong\}@cs.ox.ac.uk}
\and
\IEEEauthorblockN{Max Schaefer}
\IEEEauthorblockA{School of Computer Engineering\\
Nanyang Technological University\\
Singapore\\
schaefer@ntu.edu.sg}
}

% conference papers do not typically use \thanks and this command
% is locked out in conference mode. If really needed, such as for
% the acknowledgment of grants, issue a \IEEEoverridecommandlockouts
% after \documentclass

% for over three affiliations, or if they all won't fit within the width
% of the page, use this alternative format:
% 
%\author{\IEEEauthorblockN{Michael Shell\IEEEauthorrefmark{1},
%Homer Simpson\IEEEauthorrefmark{2},
%James Kirk\IEEEauthorrefmark{3}, 
%Montgomery Scott\IEEEauthorrefmark{3} and
%Eldon Tyrell\IEEEauthorrefmark{4}}
%\IEEEauthorblockA{\IEEEauthorrefmark{1}School of Electrical and Computer Engineering\\
%Georgia Institute of Technology,
%Atlanta, Georgia 30332--0250\\ Email: see http://www.michaelshell.org/contact.html}
%\IEEEauthorblockA{\IEEEauthorrefmark{2}Twentieth Century Fox, Springfield, USA\\
%Email: homer@thesimpsons.com}
%\IEEEauthorblockA{\IEEEauthorrefmark{3}Starfleet Academy, San Francisco, California 96678-2391\\
%Telephone: (800) 555--1212, Fax: (888) 555--1212}
%\IEEEauthorblockA{\IEEEauthorrefmark{4}Tyrell Inc., 123 Replicant Street, Los Angeles, California 90210--4321}}

% use for special paper notices
%\IEEEspecialpapernotice{(Invited Paper)}

% make the title area
\maketitle

\begin{abstract}
Web applications written in JavaScript are regularly used for dealing
with sensitive or personal data.  Consequently, reasoning about their
security properties has become an important problem, which is made
very difficult by the highly dynamic nature of the language,
particularly its support for runtime code generation.  As a first step
towards dealing with this, we propose to investigate security analyses
for languages with more principled forms of dynamic code generation.
To this end, we present a static information flow analysis for a
dynamically typed functional language with prototype-based inheritance
and staged metaprogramming.  We prove its soundness, implement it and
test it on various examples designed to show its relevance to
proving security properties, such as noninterference, in JavaScript.
To our knowledge, this is the first fully static information flow analysis for a
language with staged metaprogramming, and the first formal
soundness proof of a CFA-based information flow analysis for a
functional programming language.

\end{abstract}

\begin{IEEEkeywords}
noninterference; staged metaprogramming; CFA;
information flow; dynamically typed languages; JavaScript;
static analysis
\end{IEEEkeywords}

% For peer review papers, you can put extra information on the cover
% page as needed:
% \ifCLASSOPTIONpeerreview
% \begin{center} \bfseries EDICS Category: 3-BBND \end{center}
% \fi
%
% For peerreview papers, this IEEEtran command inserts a page break and
% creates the second title. It will be ignored for other modes.
\IEEEpeerreviewmaketitle

%%% Start of our macros for this paper.

\definecolor{darkred}{rgb}{0.75,0,0}
\newcommand{\TODO}[1]{\textcolor{darkred}{\textbf{$\blacktriangleright$#1$\blacktriangleleft$}}}

% switch of draft version
%\newif\ifdraft\drafttrue
\newif\ifdraft\draftfalse

\definecolor{DarkGreen}{RGB}{0,100,0}
\definecolor{DarkBlue}{RGB}{0,0,200}
% the marker to indicate what is changed
\ifdraft
\newcommand{\mlchanged}[1]{{\color{DarkGreen} {#1}}}
\newcommand{\lochanged}[1]{{\color{red} {#1}}} 
\newcommand{\mschanged}[1]{{\color{Cyan} {#1}}}
\newcommand{\ml}[1]{{\color{DarkGreen}{[{#1}---Martin]}}}
\newcommand{\lo}[1]{{\color{red}{[{#1}---Luke]}}}
\newcommand{\ms}[1]{{\color{DarkBlue}{[{#1}---Max]}}}
\else
\newcommand{\mlchanged}[1]{{#1}}
\newcommand{\lochanged}[1]{{#1}}
\newcommand{\mschanged}[1]{{#1}}
\newcommand{\ml}[1]{}
\newcommand{\lo}[1]{}
\newcommand{\ms}[1]{}
\fi
   
%%%
   
% from cfa.tex

\newcommand{\NT}[1]{\textit{#1}}
\newcommand{\Ctor}[1]{\texttt{#1}}
\newcommand{\AbsVal}{\NT{AbsVal}} 
\newcommand{\AbsVar}{\NT{AbsVar}} 
\newcommand{\Label}{\NT{Label}}   
\newcommand{\Name}{\NT{Name}}     
\newcommand{\FV}{\textrm{FV}}     

\newcommand{\pow}{\mathcal{P}}
\newcommand{\kw}[1]{\ensuremath{\mathop{\mbox{\small\rm\textsf{\textbf{#1}}}}}}
\newcommand{\hole}{\ensuremath{\mathop{\bullet}}}
\newcommand{\finmap}{\stackrel{\mbox{\scriptsize fin}}{\longrightarrow}}
\newcommand{\dom}{\ensuremath{\mathop{dom}}}

\newcommand{\red}[1]{\stackrel{#1}{\dashrightarrow}}
\newcommand{\ev}[1]{\xrightarrow{#1}}
\newcommand{\rlname}[1]{{\footnotesize(\textsc{#1})}}

%%%
\newcommand{\uline}[1]{\rule[0pt]{#1}{0.4pt}}
\newcommand\proto{\mathtt{\quot\hbox{\uline{2mm}}proto\hbox{\uline{2mm}}\quot}}
\newcommand{\Marker}{\NT{Marker}}
\newcommand{\lbl}{\mathit{lbl}}  
\newcommand\Ctxt{{\cal C}}
\newcommand\lowsec{\hbox{\scriptsize L}}
\newcommand\highsec{\hbox{\scriptsize H}}
\newcommand\mm{\mathfrak{m}}

%%%

\newcommand{\blank}{\underline{\hspace{0.5em}}}
\newcommand{\evstar}[1]{\xrightarrow{#1}\!\!^*}
\newcommand{\evfin}[2]{\xrightarrow{#1}\!\!^{#2}}
\newcommand{\unmark}{\mathop{\textit{unmark}}}   
\newcommand{\erase}[2]{\lfloor {#1} \rfloor_{#2}}

\newcommand{\direct}{\rightsquigarrow}
\newcommand{\indirect}{\looparrowright}
\newcommand{\nedirect}{\mbox{\rotatebox{45}{$\direct$}}}
\newcommand{\sedirect}{\mbox{\rotatebox{-45}{$\direct$}}}
\newcommand{\flowsto}{\leadsto}
\newcommand{\ifmodels}{\models_{\textsc{IF}}}
\newcommand{\lowin}{\stackrel{{\rm low}}{\longrightarrow}}
\newcommand{\highin}{\stackrel{{\rm high}}{\longrightarrow}}

\newcommand{\slamjs}{SLamJS\xspace}
\newcommand{\lambdajs}{$\lambda_\mathrm{JS}$}
\newcommand{\quot}{\texttt{"}}

\newcommand{\anystage}{{\scriptsize \wasylozenge}}
\newcommand{\ang}[1]{\langle {#1} \rangle}

\newcommand{\lsec}{\hbox{\scriptsize L}}
\newcommand{\hsec}{\hbox{\scriptsize H}}
\newcommand{\isec}{\hbox{\scriptsize I}}
\newcommand{\lowset}{\{ \lsec \}}
\newcommand{\vn}[1]{\mathit{#1}} 

\newcommand{\ee}{\epsilon}

\newtheorem{theorem}{Theorem}
\newtheorem{example}{Example}
\newtheorem{lemma}{Lemma}
\newtheorem{examplelist}[example]{\hspace*{-1\parindent}Example}
%%% End of our macros for this paper.

% !TEX root = slam-js-ni.tex
%\lo{@Martin: I agree with Max's proposal to systematically replace metaprogramming by staged metaprogramming, including the occurrence in the title. Incidentally I don't think the reader is likely to confuse our use of stage in  ``staged metaprogramming'' with the use by Chugh et al.~in the title of their PLDI09 paper.} \ml{OK, let's do that.}

\section{Introduction}
\label{sec:intro}
%cite pottier/abadi
An \emph{information flow} analysis determines which values in a
program can influence which parts of the result of the program.  Using
an information flow analysis, we can, for instance, prove that program
inputs that are deemed high security do not influence low security
outputs; this important security property is known as
\emph{noninterference}~\cite{goguen82}.
%cite denning & denning

Early work on noninterference focused mainly on applications in a
military or government setting, where there might be strict rules
about security clearance and classification of documents.  More
recently, there has been increased interest in information security
(and hence its analysis) for Web applications, particularly for Web 2.0
applications written in JavaScript.

We have developed a static information flow analysis for a dynamically
typed, pure, functional language with stage-based metaprogramming~\cite{DBLP:conf/popl/KimYC06};
we call the language \slamjs (Staged Lambda JS) because it exhibits a
number of JavaScript's interesting features in an idealised, lambda
calculus-based setting~\cite{guha10}.  The analysis is based on the idea of
extending a constraint-based formulation of the analysis 0CFA~\cite{shivers88}
with constraints to track information
flow. We believe that the idea could be extended to other CFA-style
analyses (such as CFA2~\cite{vardoulakis11}) for improved precision.
We have formally proved the correctness of our analysis; we have also
implemented it and tested it on a number of examples.

Supporting material,
which includes mechanisations of our key results in the theorem prover Coq
and an implementation of our analysis in OCaml,
is available online at
\url{http://mjolnir.cs.ox.ac.uk/web/slamjs/}.

The structure of the remainder of the paper is as follows.  In
Section~\ref{sec:slamjs}, we present \slamjs: we begin with an
explanation of why we believe our chosen combination of language
features is relevant to information security in Web applications.
Next, we present the semantics of \slamjs and explain, using an
augmented semantics, what information flow means in this language.
Section~\ref{sec:analysis} explains how the analysis works and how we
proved its correctness.  We discuss our implementation and some
examples on which we have tested the analysis in
Section~\ref{sec:implementation}.  In Section~\ref{sec:related}, we
examine the gap between our work and a practical analysis for
real-world Web applications.  We also discuss other research
on analysis of information flow and staged metaprogramming,
before concluding in Section~\ref{sec:conclusions}.

\section{The Language \slamjs}\label{sec:slamjs}
\subsection{Motivation}
The new arena of Web applications presents many interesting challenges for
information flow analysis.
While there is an extensive body of research on information flow in
statically typed languages~\cite{DBLP:journals/toplas/PottierS03},
there is little tackling dynamically typed
languages.
The semantics of JavaScript are complex and poorly understood~\cite{DBLP:conf/aplas/MaffeisMT08},
which makes any formal analysis difficult.
Web applications frequently comprise code from multiple sources (including
libraries and adverts), written by multiple authors in an ad-hoc style. They
are often interactive (so cannot be viewed as a single execution with inputs
and outputs) and it might not be known in advance which code will be loaded.

The $\kw{eval}$ construct of JavaScript, which allows execution of arbitrary code
strings, is particularly troublesome, to the extent that many analyses
just ignore it.
However, a recent survey shows that real JavaScript code uses $\kw{eval}$
extensively~\cite{richards11}.
Its uses vary widely from straightforward (loading data via JSON)
through ill-informed (accessing fields of an object without using array notation)
to subtle (changing scoping behaviour)
and complex (emulating higher order functions).
We think that it is important to develop techniques for analysing this
notorious construct.

So that we might reasonably work formally, we have developed a
simplified language called \slamjs.  The language is heavily
influenced by \lambdajs, a ``core calculus'' for
JavaScript~\cite{guha10}.  Like JavaScript, \slamjs is dynamically
typed and features first-class functions and objects with
prototype-based inheritance.  Like JavaScript, it allows code to be
constructed, passed around and executed at run-time. Unlike
JavaScript, this is achieved using Lisp-style code quotations rather
than code strings~\cite{choi11}.  Recent work indicates that real-world usage of
$\kw{eval}$ is often of a form that could be expressed using code
quotations~\cite{jensen12}.  Thus analysis of programs with executable
code quotations is an important step towards analysis of programs with
executable code strings.

% !TEX root = slam-js-ni.tex
\subsection{Syntax and Semantics of \slamjs}

\subsubsection{Syntax}

\begin{figure*}
\[
\begin{array}{llcl}
\text{Booleans} & b & ::= & \kw{true} \mid \kw{false} \\
\text{Strings} & s & \in & \NT{String} \\
\text{Numbers} & n & \in & \NT{Number} \\
\text{Names} & x & \in & \Name \\
%\text{Labels} & \ell & \in & \Label \\
\text{Constants} & k & ::= & \kw{undef} \mid \kw{null} \mid b \mid s \mid n \\
\text{Expressions} & e & ::= & k \mid \{\overline{s:e}\} \mid x \mid \kw{fun}(x)\{e\} \mid e(e) \mid \kw{box}\>e \mid \kw{unbox}\>e  \mid \kw{run}\>e \\
             &   & \mid & \kw{if}(e)\{e\}\kw{else}\{e\} \mid e[e] \mid e[e]=e \mid \kw{del}\>e[e] \mid (e,\rho) \mid \kw{run}\>e\>\kw{in}\>\rho \\
\text{Values} & v,v^0 & ::= & (\kw{fun}(x)\{e\},\rho) \\
              & v^n & ::= & k \mid \{\overline{s:v^n}\} \mid (\kw{box}\>v^{n+1}) \\
              & v^{n+1} & ::= & x \mid (\kw{fun}(x)\{v^{n+1}\}) \mid (v^{n+1}(v^{n+1})) \mid (\kw{run}\>v^{n+1}) \\
              &        & \mid & (\kw{if}(v^{n+1})\{v^{n+1}\}\kw{else}\{v^{n+1}\}) \\    	      &        & \mid & (v^{n+1}[v^{n+1}]) \mid (v^{n+1}[v^{n+1}]=v^{n+1}) \mid  (\kw{del}\>v^{n+1}[v^{n+1}]) \\
              & v^{n+2} & ::= & (\kw{unbox}\>v^{n+1}) \\
\text{Environments} & \rho & \in & \Name\finmap v^0
\end{array}
\]
\caption{Syntax of \slamjs}
\label{fig:syntax}
\end{figure*}

\slamjs is a functional language with atomic constants, records, branching, first-class
functions and staged metaprogramming; the syntax is given in Figure~\ref{fig:syntax}.

The language has five types of atomic constant: booleans, strings,
numbers and two special values ($\kw{undef}$ and $\kw{null}$) to
indicate undefined or null values.  A record $\{\overline{s:v}\}$ is a
finite mapping from fields (named by strings) to values.  Fields can
be read ($e[e]$), updated or replaced ($e[e]=e$) and deleted
($\kw{del}\>e[e]$).  Records support prototype-based lookup: a read
from an undefined field of a record is redirected to the corresponding
field on the record held in its $\proto$ field, if there is one.

Branching on boolean values is enabled by the
$\kw{if}(e)\{e\}\kw{else}\{e\}$ construct.
Functions can be defined ($\kw{fun}(x)\{e\}$) and applied ($e(e)$).

Staged metaprogramming is supported through use of the $\kw{box}$,
$\kw{unbox}$ and $\kw{run}$ constructs. $\kw{box}\>e_1$ turns $e_1$ into a
``quoted'' or ``boxed'' code value, which can be executed using $\kw{run}$.
The use of $\kw{unbox}\>e_2$ within a boxed expression $e_1$
forces evaluation of $e_2$ to a boxed value, which is spliced into $e_1$
\emph{before it becomes a boxed value}.

Expressions of the form $(e,\rho)$ and $\kw{run}\>e\>\kw{in}\>\rho$
only arise as intermediate terms during execution: the former
represents an explicit substitution where all free variables of the
expression $e$ are given their value by the environment $\rho$; the
latter represents an expression to be unboxed and evaluated in
environment $\rho$.

Values exist at all stages. Constants, records with constant fields and
constant code quotations are values at every stage; closures are only
values at stage zero. Other constructs may be values at higher stages,
provided that their subexpressions are values at the appropriate
stage. We generally omit the stage superscript for values of
stage zero.

\subsubsection{Semantics}
We give a small-step operational semantics with evaluation contexts
and explicit substitutions for \slamjs.  There are two reduction
relations, $\red{n}$ and $\ev{n}$, each annotated with a level $n$.
The former is for top-level reduction, while the latter is for evaluation under
a context.

\emph{Evaluation contexts} In a staged setting, evaluation
contexts may straddle stage boundaries, hence they are annotated with
stage subscripts and superscripts.  A context $C^m_n$ denotes a hole
at stage $n$ inside an expression at stage $m$.  For a context $C^m_n$ and
an expression $e$, we denote by $C^m_n\langle e\rangle$ the expression
obtained by plugging $e$ into the hole contained in $C^m_n$.
The grammar of some key evaluation contexts is given in Figure~\ref{fig:somecontexts};
full details are
in Appendix~\ref{a:sem}.

\begin{figure*}
\[
\begin{array}{rclclcclcl}
C^m_n & ::=   & [\,] & \in & \Ctxt^n_n 
& \hspace{1em} \\
      & \mid & (\kw{fun}(x)\{C^{m+1}_n\}) & \in & \Ctxt^{m+1}_n 
&      & \mid & (\kw{if}(C^m_n)\{e\}\kw{else}\{e\}) & \in & \Ctxt^m_n \\
      & \mid & (C^m_n(e)) & \in & \Ctxt^m_n 
&      & \mid & (\kw{if}(v^{m+1})\{C^{m+1}_n\}\kw{else}\{e\}) & \in & \Ctxt^{m+1}_n \\
      & \mid & (v^m(C^m_n)) & \in & \Ctxt^m_n 
&      & \mid & (\kw{if}(v^{m+1})\{v^{m+1}\}\kw{else}\{C^{m+1}_n\}) & \in & \Ctxt^{m+1}_n \\
      & \mid & (\kw{unbox}\>C^m_n) & \in & \Ctxt^{m+1}_n 
&      & \mid & (\kw{box}\>C^{m+1}_n) & \in & \Ctxt^m_n \\
      & \mid & (\kw{run}\>C^m_n\>\kw{in}\>\rho) & \in & \Ctxt^m_n
&      & \mid & (\kw{run}\>C^m_n) & \in & \Ctxt^m_n
\end{array}
\]
\caption{Selected evaluation contexts}
\label{fig:somecontexts}
\end{figure*}

\emph{Reduction rules} Top-level reduction rules fall into two
categories: environment propagation rules for pushing explicit
substitutions inwards (Figure~\ref{fig:pushenv}),
and proper reduction rules (Figure~\ref{fig:proper reduction rules}). The
former are fairly straightforward, so full details are left for Appendix~\ref{a:sem}.
Note that explicit substitutions
only apply at stage zero, hence $(x,\rho)$ evaluates to $x$ at level
$n+1$ without looking up $x$ in $\rho$.
Furthermore, observe that $(\kw{run}\>e, \rho)$ pushes its environment into $e$,
allowing boxed code values to capture variables from outside.

\begin{figure*}
\[
\begin{array}{rclcrcl}
(k,\rho) & \red{n} & k & &
(x,\rho) & \red{n+1} & x \\
(\kw{fun}(x)\{e\},\rho) & \red{n+1} & (\kw{fun}(x)\{(e,\rho)\}) & &
(e_1 (e_2),\rho) & \red{n} & ((e_1,\rho)((e_2,\rho))) \\
(\kw{box}\>e,\rho) & \red{n} & (\kw{box}\>(e,\rho)) & &
(\kw{unbox}\>e,\rho) & \red{n} & (\kw{unbox}\>(e,\rho)) \\
(\kw{run}\>e,\rho) & \red{0} & (\kw{run}\>(e,\rho)\>\kw{in}\>\rho) & &
(\kw{run}\>e,\rho) & \red{n+1} & (\kw{run}\>(e,\rho)) \\
\multicolumn{3}{r}{(\kw{if}(e_1)\{e_2\}\kw{else}\{e_3\},\rho)}
& \red{n} &
\multicolumn{3}{l}{(\kw{if}((e_1,\rho))\{(e_2,\rho)\}\kw{else}\{(e_3,\rho)\})}
\end{array}
\]
\caption{Selected environment propagation rules}
\label{fig:pushenv}
\end{figure*}

The proper reduction rules are
also quite standard~\cite{choi11}, except for the field access
rules, which are designed to mimic JavaScript semantics as far as possible.

In particular, every record is expected to have a $\proto$ field,
which holds either the value $\kw{null}$ or another record, giving
rise to a chain of prototype objects that ultimately ends in
$\kw{null}$. Reading a record field follows this chain by rule
(\textsc{Read}2), until the field is either found (\textsc{Read}1), or
the top of the chain is reached, where (\textsc{Read}3) yields
\kw{undef}. Note that the reduction $\red{0}$ can get stuck, for
example, when applying a non-function, or branching on a non-boolean.

\begin{figure*}
\[
\begin{array}{lrcll}
\rlname{Lookup} & (x,\rho) & \red{0} & \rho(x) & \\ %\text{where $\rho(x)=v$} \\
\rlname{Apply} & ((\kw{fun}(x)\{e\},\rho)(v)) & \red{0} & (e,\rho[x\mapsto v]) \\
\rlname{Unbox} & (\kw{unbox}\>(\kw{box}\>v^1)) & \red{1} & (v^1) \\
\rlname{Run} & (\kw{run}\>(\kw{box}\>v^1)\>\kw{in}\>\rho) & \red{0} & (v^1, \rho) \\
\rlname{IfTrue} & (\kw{if}(\kw{true})\{e_1\}\kw{else}\{e_2\}) & \red{0} & e_1 \\
\rlname{IfFalse} & (\kw{if}(\kw{false})\{e_1\}\kw{else}\{e_2\}) & \red{0} & e_2 \\
\rlname{Read1} & (\{\overline{s:v},s_i:v_i,\overline{s:v}'\}[s_i]) & \red{0} & v_i \\
\rlname{Read2} & (\{\overline{s:v},\proto:\{\overline{s:v}'\},\overline{s:v}''\}[s_x]) & \red{0} & (\{\overline{s:v}'\}[s_x]) & \text{if $s_x\not\in\overline{s}\cup\overline{s}''$} \\
\rlname{Read3} &  (\{\overline{s:v},\proto:\kw{null},\overline{s:v}''\}[s_x]) & \red{0} & \kw{undef} & \text{if $s_x\not\in\overline{s}\cup\overline{s}''$} \\
\rlname{Write1} & (\{\overline{s:v},s_i:v_i,\overline{s:v}'\}[s_i]=v_i') & \red{0} & \{\overline{s:v},s_i:v_i',\overline{s:v}'\} \\
\rlname{Write2} & (\{\overline{s:v}\}[s_x]=v_x) & \red{0} & \{\overline{s:v},s_x:v_x\} & \text{if $s_x\not\in\overline{s}$}\\
\rlname{Del1} & (\kw{del}\>\{\overline{s:v},s_i:v_i,\overline{s:v}'\}[s_i]) & \red{0} & \{\overline{s:v},\overline{s:v}'\} \\
\rlname{Del2} & (\kw{del}\>\{\overline{s:v}\}[s_x]) & \red{0} & \{\overline{s:v}\} & \text{if $s_x\not\in\overline{s}$}\\
\end{array}
\]

\caption{Proper reduction rules}\label{fig:proper reduction rules}
\end{figure*}

% This figure belongs with the next section, but put it here so it will
% appear on the same page as the next section.
\begin{figure*}
\[
\begin{array}{lrcll}
\rlname{Lift-App} & ((\mm : e), \rho)(v) & \red{0} &
	(\mm : (( e, \rho)(v))) \\
\rlname{Lift-If} & (\kw{if} (\mm : v) \{ e_1 \} \kw{else} \{ e_2 \} ) & \red{0} &
	(\mm : (\kw{if} (v) \{ e_1 \} \kw{else} \{ e_2 \} )) \\
\rlname{Lift-Unbox} & \kw{unbox}\>(\mm : v) & \red{1} & (\mm : (\kw{unbox}\>v)) \\
\rlname{Lift-RunIn} & \kw{run}\>(\mm : v) \kw{in}\>\rho & \red{0} &
	(\mm : (\kw{run}\>v \kw{in}\>\rho)) \\
% need to express precendence of these rules with side conditions
\rlname{Lift-ReadSel} & (v_1 [\mm : v_2]) & \red{0} & (\mm : (v_1[v_2])) \\
\rlname{Lift-ReadRec} & ((\mm : v_1) [v_2]) & \red{0} & (\mm : (v_1[v_2])) \\
\rlname{Lift-WriteSel} & (v_1 [\mm : v_2] = v_3) & \red{0} & (\mm : (v_1[v_2] = v_3)) \\
\rlname{Lift-WriteRec} & ((\mm : v_1) [v_2] = v_3) & \red{0} & (\mm : (v_1[v_2] = v_3)) \\
\rlname{Lift-DelSel} & (\kw{del}\>v_1[\mm : v_2]) & \red{0} & (\mm : (\kw{del}\>v_1[v_2]))\\
\rlname{Lift-DelRec} & (\kw{del}\>(\mm : v_1)[v_2]) & \red{0} & (\mm : (\kw{del}\>v_1[v_2]))
\end{array}
\]
\caption{Semantic rules for lifts}
\label{fig:lifts}
\end{figure*}

There is only a single rule for $\ev{m}$:
\[
\begin{array}{rcll}
C^m_n\langle e\rangle & \ev{m} & C^m_n\langle e'\rangle & \text{if $e\red{n}e'$}
\end{array}
\]

We write $\ev{\anystage}$ for the union over all $m$ of $\ev{m}$, and
$\evstar{\anystage}$ for its reflexive, transitive closure.

\begin{example}
\label{ex:langdef_simple_if}
Here is an evaluation trace of a simple $\kw{if}$ statement.
We use $\ee$ to stand for the empty environment.
\[
\begin{array}{ll}
 & (\kw{if} (\kw{true}) \{ \kw{false} \} \kw{else} \{ 1 \}, \ee) \\
\ev{0} & \kw{if} ((\kw{true}, \ee)) \{ (\kw{false}, \ee) \} \kw{else} \{ (1, \ee) \} \\
\ev{0} & \kw{if} (\kw{true}) \{ (\kw{false}, \ee) \} \kw{else} \{ (1, \ee) \} \\
\ev{0} & (\kw{false}, \ee) \ev{0} \kw{false}
\end{array}
\]
\end{example}

\begin{example}
The staging constructs in \slamjs allow fragments of code to be treated as
values and spliced together or evaluated at run-time, as shown in this
evaluation trace.
\[
\begin{array}{ll}
\mbox{\hspace{0.5em}} & (\kw{run}\> (\kw{box}\> (\kw{if} (\kw{unbox}\> (\kw{box}\> (\kw{true}))) \{\kw{false}\} \kw{else} \{1\})), \ee) \\
\lefteqn{\ev{0}} & \kw{run}\> (\kw{box}\> (\kw{if} (\kw{unbox}\> (\kw{box}\> (\kw{true}))) \{\kw{false}\} \\
    & \hspace{5em} \kw{else} \{1\}), \ee) \kw{in} \ee \\
\lefteqn{\evstar{0}} & \kw{run}\> (\kw{box}\> (\kw{if} (\kw{unbox}\> (\kw{box}\> (\kw{true}))) \{(\kw{false}, \ee)\} \\
    & \hspace{5em}  \kw{else} \{(1, \ee)\})) \kw{in}\> \ee \\
\lefteqn{\ev{0}} & \kw{run}\> (\kw{box}\> (\kw{if} (\kw{true}) \{(\kw{false}, \ee)\} \kw{else} \{(1, \ee)\})) \kw{in} \ee \\
\lefteqn{\evstar{0}} & \kw{run}\> (\kw{box}\> (\kw{if} (\kw{true}) \{\kw{false}\} \kw{else} \{1\})) \kw{in} \ee \\
\lefteqn{\ev{0}} & (\kw{if} (\kw{true}) \{\kw{false}\} \kw{else} \{1\}, \ee) \\
\lefteqn{\ev{0}} & \kw{if} (\kw{true}, \ee) \{(\kw{false}, \ee)\} \kw{else} \{(1, \ee)\} \\
\lefteqn{\ev{0}} & \kw{if} (\kw{true}) \{(\kw{false}, \ee)\} \kw{else} \{(1, \ee)\}
\ev{0} (\kw{false}, \ee)
\ev{0} \kw{false}
\end{array}
\]
\end{example}

\newcommand{\envmapsto}{\!\mapsto\!}
\begin{example}
Our staging constructs allow variables to be captured by code values
originating outside their scope.
Here, the code value $\kw{box}\>y$ is outside the scope of $y$,
but captures it during evaluation.
\[
\begin{array}{ll}
\mbox{\hspace{0.5em}} & (((\kw{fun}(x)\{(\kw{fun}(y)\{\kw{run}\> x\})\}) (\kw{box}\> y))(\kw{true}), \ee) \\
\lefteqn{\evstar{0}} & (\kw{fun}(y)\{\kw{run}\> x\}, \ang{x \envmapsto \kw{box}\> y}) (\kw{true}) \\
\lefteqn{\ev{0}} & (\kw{run}\> x, \ang{y \envmapsto \kw{true}, x \envmapsto \kw{box}\> y}) \\
\lefteqn{\ev{0}} & \kw{run}\> (x, \ang{y \envmapsto \kw{true}, x \envmapsto \kw{box}\> y}) \kw{in}\> \ang{y \envmapsto \kw{true}, x \envmapsto \kw{box}\> y} \\
\lefteqn{\ev{0}} & \kw{run}\> (\kw{box}\> y) \kw{in}\> \ang{y \envmapsto \kw{true}, x \envmapsto \kw{box}\> y} \\
\lefteqn{\ev{0}} & (y, \ang{y \envmapsto \kw{true}, x \envmapsto \kw{box}\> y}) \\
\lefteqn{\ev{0}} & \kw{true}
\end{array}
\]
This useful feature is vital for modelling certain uses of $\kw{eval}$;
the above code corresponds to this JavaScript: \\
\texttt{((function (x) \{return function (y) \{ \\
\mbox{} \mbox{} \mbox{} \mbox{} return (eval(x));\}\})("y"))(true);} \\
However, the power comes at a price:
the usual alpha equivalence property of
$\lambda$-calculus does not hold in \slamjs,
which makes reasoning about programs harder.
\end{example}

% !TEX root = slam-js-ni.tex
\subsection{Augmented Semantics of \slamjs}

The result of a program can depend on its component values in essentially two
different ways.
Consider programs operating on two variables $l$ and $h$.
The program
$(\kw{if}(l) \{ h \} \kw{else} \{ 1 \})$
may evaluate to the value of $h$ (if $l$ is $\kw{true}$); we say that there is a
\emph{direct flow} from $h$ to the result.
Conversely, the program
$(\kw{if} (h) \{ \kw{true} \} \kw{else} \{ 1 \})$
cannot evaluate to $h$. However, the result of evaluation tells us whether
$h$ was $\kw{true}$ or $\kw{false}$ because $h$ influences the control flow of the
program; there is an \emph{indirect flow} from $h$ to the result of the
program.

In order to track the dependency of a result on its component subexpressions,
we augment the language with explicit \emph{dependency
markers}~\cite{DBLP:conf/icfp/PottierC00,DBLP:conf/popl/AbadiBHR99}.
We also introduce new rules for lifting markers into their parent
expressions to avoid losing information about dependencies.
The augmented semantics is not intended for use in the execution of programs;
rather, we use it for analysing and reasoning about dependencies in the
original language. We begin by adding markers to the syntax:
\[
\begin{array}{llcl}
\text{Markers} & \mm & \in & \Marker \\
\text{Expressions} & e & ::= & \ldots \mid (\mm : e) \\
\text{Values} & v^n & ::= & \ldots \mid (\mm : v^n)
\end{array}
\]
We extend contexts to allow evaluation within a marked expression:
\[
\begin{array}{rclrclcl}
C^m_n & ::=    & \ldots 
      & \mid & (\mm : C^m_n) & \in & \Ctxt^m_n
\end{array}
\]
We allow propagation of environments within marked expressions:
\[
\begin{array}{rcl}
(\mm : e,\rho) & \red{n} & (\mm : (e,\rho))
\end{array}
\]

In Figure~\ref{fig:lifts} we introduce lifts to maintain a record of indirect flows.
Note there is no need for a lift rule on the right of an
assignment (i.e., $v_1[v_2] = (\mm:v_3) \red{0} (\mm: v_1[v_2] = v_3)$),
since the flow from $v_3$ is direct.

\begin{example}
Recall Example~\ref{ex:langdef_simple_if}. Suppose we add markers to each of
the components of the $\kw{if}$. The evaluation trace now becomes:
\[
\begin{array}{ll}
& (\kw{if} ((\hsec : \kw{true})) \{ (\lsec : \kw{false}) \} \kw{else} \{ ( \isec : 1) \}, \ee) \\
\evstar{0} & \kw{if} ((\hsec : \kw{true})) \{ ((\lsec : \kw{false}), \ee) \} \kw{else} \{ (( \isec : 1), \ee) \} \\
\ev{0} & (\hsec : (\kw{if} (\kw{true}) \{ ((\lsec : \kw{false}), \ee) \} \kw{else} \{ (( \isec : 1), \ee) \} )) \\
\ev{0} & (\hsec : ((\lsec : \kw{false}), \ee)) \\
\evstar{0} & (\hsec : (\lsec : \kw{false}))
\end{array}
\]
Note how the markers $\hsec$ and $\lsec$ in the result indicate that it depends on
the marked values $(\hsec : \kw{true})$ and $(\lsec : \kw{false})$.
\end{example}

\begin{example}
\label{ex:marked_fun}
Here is an example of marked evaluation with functions:
\[
(((\kw{fun} (x) \{ \isec : (\kw{fun} (y) \{ x \}) \})(\hsec : 1))(\lsec : 2), \ee)
\evstar{0} (\isec : (\hsec : 1))
\]
Observe that the result depends on $\isec$ because the function $(\isec : (\kw{fun} (y) \{ x \}) \}))$
was used to compute it,
but not on $\lsec$, as $(\lsec : 2)$ is discarded by that function.
\end{example}

\subsubsection*{Simulation}

Consider a function $\unmark$, defined in the obvious way, which strips an
expression of all markers. Clearly
if $\unmark(e_1) = f_1 \ev{n} f_2$, then for some $e_2$ such that $e_1 \evstar{n} e_2$, we have $\unmark(e_2) = f_2$.

\section{Information Flow Analysis for \slamjs}\label{sec:analysis}
\subsection{Overview}

Before we can define an information flow analysis,
we need to define what information flow is.
Following Pottier and Conchon~\cite{DBLP:conf/icfp/PottierC00},
we use the idea that if information does not flow from a marked expression into a value resulting from evaluation,
then erasing that marked expression or replacing it with a dummy value should not affect the result of evaluation.
(We use only their proof technique; their type-based analysis is not applicable to our language.)
We begin in Section~\ref{section:erasure} by defining erasure and establishing some results about its behaviour.

Our information flow analysis is built on top of a 0CFA-style analysis
capable of handling our staging constructs. Two variants of such an
analysis are explained in Section~\ref{section:0cfa};
%We have also produced an improved version of the analysis that works around
%imprecision resulting from lack of alpha equivalence in \slamjs.
mechanised correctness proofs in Coq are available online.

In Section~\ref{section:if}, we present the information flow analysis itself.
A key idea in CFA is that control flow influences data flow and vice versa.
Information flow depends on control and data flow, but the reverse is not
true. Therefore it is possible to treat information flow analysis as an
addition to CFA, rather than a completely new combined analysis.
We have two versions of the CFA, each of which yields an
information flow analysis.
We sketch a correctness proof of the simpler analysis;
complete mechanised proofs of both are available online.
%\ms{How are these proofs different from the information flow soundness
%proof mentioned in the next paragraph?}
%\ml{These proofs are the information flow equivalent of the CFA proof for
%coherence with a single step of evaluation.
%The one mentioned below is for a sequence of evaluation steps.
%Thinking about it, it's a relatively trivial theorem once coherence has been proved.
%Maybe it would make more sense to have the conclusion of the information
%flow soundness theorem be $\erase{t^\ell}{M} \evstar{\anystage} v$ and
%apply Stability in its proof,
%rather than in the remarks about the relevance to noninterference?}
%\ms{I'll leave that decision to you, as long as we can carefully distinguish between proofs that have been mechanised (hopefully as many as possible), and proofs that have only been worked out informally.}

Finally, in Section~\ref{section:soundness}, we prove soundness of the information flow
analysis. We also discuss its relationship with noninterference.

% !TEX root = slam-js-ni.tex
\subsection{Erasure and Stability}
\label{section:erasure}

\subsubsection{Erasure and Prefixes} We extend the language with a ``hole'' that
behaves like an unbound variable:

\[
\begin{array}{llcl}
\text{Expressions} & e & ::= & \ldots \mid \blank \\
\text{Values} & v^n & ::= & \ldots \mid \blank
\end{array}
\]

\[
\begin{array}{rcl}
(\blank ,\rho) & \red{n} & \blank
\end{array}
\]

Now for $M \subseteq \Marker$, define the $M$-\emph{erasure}
of $e$, written $\erase{e}{M}$, to be:
$e$ with any subexpression $(\mm : e')$
where $\mm \notin M$ replaced by $\blank$.
A full definition is in Appendix~\ref{a:sem}.

\subsubsection{Prefixing and Monotonicity}

We say that $e_1$ is a \emph{prefix} of $e_2$ or write $e_1 \preccurlyeq e_2$ if
replacing some subexpressions of $e_2$ with $\blank$ gives $e_1$.

Evaluation is monotonic with respect to prefixing:
if $e_1 \preccurlyeq e_2$ and $e_1 \evstar{\anystage} f$, where $f$ contains no $\blank$,
then $e_2 \evstar{\anystage} f$.

\begin{lemma}[Step Stability]
If $e_1 \red{n} e_2$, then
either $\erase{e_1}{M} \red{n} \erase{e_2}{M}$
or the reduction rule applied to derive this is a lift (\textsc{Lift-*}) of a marker $\mm \notin M$.
\end{lemma}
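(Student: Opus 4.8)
The plan is to prove the lemma by a direct case analysis on the top-level rule witnessing $e_1 \red{n} e_2$. Since $\red{n}$ is defined purely by axiom schemas — all congruence and context behaviour having been relegated to $\ev{m}$ — no induction is needed: I would simply inspect each environment-propagation rule (Figure~\ref{fig:pushenv} together with its read/write/delete counterparts in the appendix, the marker rule $(\mm : e, \rho) \red{n} (\mm : (e,\rho))$, and the hole rule $(\blank, \rho) \red{n} \blank$), each proper reduction rule (Figure~\ref{fig:proper reduction rules}), and each lift rule (Figure~\ref{fig:lifts}). Before the split I would record three immediate facts about $\erase{\cdot}{M}$, all following from its definition as the structural replacement of maximal unkept marked subterms by $\blank$: (i) erasure commutes with the environment operations, i.e. $(\erase{\rho}{M})(x) = \erase{\rho(x)}{M}$ and $\erase{\rho[x \mapsto v]}{M} = (\erase{\rho}{M})[x \mapsto \erase{v}{M}]$; (ii) erasure preserves value-hood at every stage, because $\blank$ is itself a value $v^n$ for all $n$ and the value grammar is closed under replacing a sub-value by $\blank$; and (iii) erasure leaves the outermost constructor of an \emph{unmarked} term, together with all record field names and string selectors, unchanged.

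For the environment-propagation and proper-reduction rules, fact (iii) guarantees that the redex pattern survives erasure: each such rule matches on a constructor or a record/selector shape that erasure does not disturb. Crucially, whenever a rule requires a key operand to be a value — the argument in \textsc{Apply}, the replacement value in \textsc{Write}, the field values read by \textsc{Read1}, and so on — fact (ii) ensures the erased operand is still a value even if it collapses to $\blank$, so the same rule still fires; fact (i) then makes the two contracta agree, giving $\erase{e_1}{M} \red{n} \erase{e_2}{M}$. The marker-propagation and hole rules deserve an explicit remark, since they are the reason the second disjunct mentions only \textsc{Lift-*}: if $\mm \notin M$ then $\erase{(\mm : e, \rho)}{M} = (\blank, \erase{\rho}{M})$, which still reduces at level $n$ — via the hole rule — to $\blank = \erase{(\mm : (e, \rho))}{M}$. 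Thus even the marker-propagation rule commutes and stays in the first disjunct.

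The genuine content lies in the lift rules. Fixing one, say \textsc{Lift-App} acting on marker $\mm$, I would split on whether $\mm \in M$. If $\mm \in M$ the marker is retained, the erased redex $((\mm : \erase{e}{M}), \erase{\rho}{M})(\erase{v}{M})$ again matches \textsc{Lift-App} — here fact (ii) is what keeps $(\mm : \erase{e}{M})$ a value so the rule applies — and its contractum is exactly $\erase{e_2}{M}$. If $\mm \notin M$, the marker, and hence (since it occupies the outermost position of the contractum) the whole right-hand side, is erased, so $\erase{e_2}{M} = \blank$, whereas $\erase{e_1}{M}$ becomes an application $(\blank, \erase{\rho}{M})(\erase{v}{M})$ admitting no top-level reduction to $\blank$; no single top-level step relates them, and we land in the second disjunct, which is precisely ``the rule is a \textsc{Lift-*} of $\mm \notin M$''. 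I expect the main obstacle to be conceptual rather than computational: articulating cleanly why the lift rules, and only they, break the one-step correspondence. The discriminating feature is that a lift pulls a marker \emph{out across} an enclosing construct (application, $\kw{if}$, field read/write/delete, $\kw{unbox}$, $\kw{run}$), so after erasure the construct survives while its marked operand becomes $\blank$; for every non-lift rule the marker-bearing positions either are excluded by the pattern — because they would themselves trigger a lift — or collapse harmlessly to the value $\blank$. The remaining work is bookkeeping: checking each of the ten lift rules and each propagation and proper-reduction rule against this template.
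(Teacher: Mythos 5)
Your proposal is correct and takes essentially the same route as the paper, whose entire proof is ``by induction over the rules defining $\red{n}$'' --- which, since those rules are all axiom schemas, is exactly the case analysis you carry out, with the lift rules as the only source of the second disjunct. Your explicit treatment of the marker-propagation case (landing in the first disjunct via the hole rule $(\blank,\rho)\red{n}\blank$) and of erasure preserving value-hood fills in precisely the bookkeeping the paper leaves implicit.
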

\begin{proof}
By induction over the rules defining $\red{n}$.
\end{proof}

%\lo{The overloading of $*$ in $e_1 \evstar{*} e_2$ is unfortunate. Find another symbol for wildcard.} \ml{Done.}

\begin{theorem}[Stability]
Consider an expression $e_1$ (which may use $\blank$) and a $\blank$-free expression $e_2$
such that $e_1 \evstar{\anystage} e_2$.
Then for every $M \subseteq \Marker$ such that $\erase{e_2}{M} = e_2$,
it follows that $\erase{e_1}{M} \evstar{\anystage} \erase{e_2}{M}$.
\end{theorem}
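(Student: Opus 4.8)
The plan is to prove the equivalent statement $\erase{e_1}{M} \evstar{\anystage} e_2$ (equivalent because $\erase{e_2}{M} = e_2$ by hypothesis) by induction on the length of the reduction $e_1 \evstar{\anystage} e_2$. The length-zero case is immediate: then $e_1 = e_2$ is $\blank$-free and fixed by $\erase{\cdot}{M}$, so $\erase{e_1}{M} = e_2$. The inductive step rests on a per-step dichotomy that lifts Step Stability from the top-level relation $\red{n}$ to the contextual relation $\ev{m}$, namely: for every step $f \ev{m} f'$, either $\erase{f}{M} \ev{m} \erase{f'}{M}$, or $\erase{f'}{M} \preccurlyeq \erase{f}{M}$.

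To prove this dichotomy I would write $f = C^m_n\langle g\rangle$ and $f' = C^m_n\langle g'\rangle$ with $g \red{n} g'$, and distinguish two situations according to the context $C^m_n$. If the hole of $C^m_n$ lies inside a marked subexpression $(\mm' : \cdot)$ with $\mm' \notin M$, that subexpression erases to $\blank$ regardless of what fills the hole, so $\erase{f}{M} = \erase{f'}{M}$ and the prefix alternative holds by reflexivity. Otherwise I would show, by induction on the context grammar, that erasure commutes with the context, $\erase{C^m_n\langle h\rangle}{M} = \erase{C^m_n}{M}\langle \erase{h}{M}\rangle$, where $\erase{C^m_n}{M}$ (the context erased, with its hole left in place) is again a valid evaluation context at the same stages; the only production that could swallow the hole is the marked context carrying an out-of-$M$ marker, which is exactly the excluded case. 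Establishing that $\erase{C^m_n}{M}$ is a genuine context uses that erasure preserves value-hood at every stage, since $\blank$ is a value at all stages and the value grammar is closed under the componentwise action of erasure.

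Given commutation, I would apply Step Stability to $g \red{n} g'$. In its first alternative, $\erase{g}{M} \red{n} \erase{g'}{M}$, so plugging into the erased context yields a real step $\erase{f}{M} \ev{m} \erase{f'}{M}$. In its second alternative the rule is a lift (\textsc{Lift-*}) of some $\mm \notin M$; every such rule has a right-hand side of the form $(\mm : h)$, so $\erase{g'}{M} = \blank \preccurlyeq \erase{g}{M}$, and since prefixing is preserved by plugging into a context we get $\erase{f'}{M} = \erase{C^m_n}{M}\langle \blank\rangle \preccurlyeq \erase{C^m_n}{M}\langle \erase{g}{M}\rangle = \erase{f}{M}$, completing the dichotomy. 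For the inductive step proper, decompose $e_1 \ev{m} f_1 \evstar{\anystage} e_2$; the induction hypothesis gives $\erase{f_1}{M} \evstar{\anystage} e_2$, and in the reduction alternative $\erase{e_1}{M} \ev{m} \erase{f_1}{M} \evstar{\anystage} e_2$ finishes it, while in the prefix alternative $\erase{f_1}{M} \preccurlyeq \erase{e_1}{M}$ together with $\erase{f_1}{M} \evstar{\anystage} e_2$ and the $\blank$-freeness of $e_2$ let me invoke monotonicity with respect to prefixing to conclude $\erase{e_1}{M} \evstar{\anystage} e_2$.

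I expect the lift case to be the main obstacle. There the naive forward simulation breaks down: erasing a marker that is lifted but lies outside $M$ collapses the whole redex to $\blank$, and the erased source term genuinely cannot reduce to that $\blank$. The decisive observations are that this collapse yields a \emph{prefix} rather than a reduction step, and that monotonicity repairs the mismatch — which is sound precisely because $e_2$ is $\blank$-free with $\erase{e_2}{M} = e_2$, so the discarded out-of-$M$ markers provably never contribute to the final result. The remaining, more routine, difficulty is the bookkeeping for the commutation lemma, in particular checking that erased evaluation contexts stay well-formed, which is where preservation of value-hood under erasure is needed.
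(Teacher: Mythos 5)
Your proposal is correct and takes essentially the same route as the paper's own proof: induction on the length of $e_1 \evstar{\anystage} e_2$, decomposition of the first step into an evaluation context plus a top-level reduction, a case split on whether that reduction is a lift of a marker $\mm \notin M$, with Step Stability giving a genuine erased step in one branch and the prefix relation plus Monotonicity absorbing the collapsed redex in the other. Your only real departure is making explicit what the paper leaves implicit — the commutation of erasure with context plugging (including preservation of value-hood and context well-formedness) and the corner case where the hole itself sits under an out-of-$M$ marker, where both erasures coincide and one gets equality rather than a step; this is a refinement rather than a different argument, and it in fact tightens the paper's slightly-too-strong claim that $\erase{C^m_n\ang{f_1}}{M} \ev{n} \erase{C^m_n\ang{f}}{M}$ always holds in the non-lift case.
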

\begin{proof}
Consider any $e_2$ and $M$ with $\erase{e_2}{M} = e_2$.
Aim to prove, for any $e_1$ with $e_1 \evstar{\anystage} e_2$, that $\erase{e_1}{M} \evstar{\anystage} e_2$.
Argue by induction over the length $k$ of derivations of $e_1 \evstar{\anystage} e_2$.

\emph{Base case:} $k = 0$. So $e_1 = e_2$. We have $\erase{e_2}{M} = e_2$,
so trivially $\erase{e_1}{M} = e_2$.

\emph{Inductive step:} $k = k' + 1$.
Given $e_1 \ev{n} e \evfin{\anystage}{k'} e_2$, aim to prove $\erase{e_1}{M} \evstar{\anystage} e_2$.
Assume by the induction hypothesis that $\erase{e}{M} \evfin{\anystage}{k'} e_2$.
Let $e_1 = C^m_n \ang{f_1}$ and $e = C^m_n \ang{f}$ with $f_1 \red{n} f$.
Case split on if $f_1 \red{n} f$ is a lift of a marker $\mm  \notin M$.

If it is such a lift, then let $f = (\mm  : f')$.
Now $\erase{f}{M} = \blank$, so $\erase{f}{M} \preccurlyeq \erase{f_1}{M}$.
Thus $\erase{C^m_n \ang{f}}{M} \preccurlyeq \erase{C^m_n \ang{f_1}}{M}$;
that is, $\erase{e}{M} \preccurlyeq \erase{e_1}{M}$.
We already have (from the induction hypothesis) that $\erase{e}{M} \evfin{\anystage}{k'} e_2$.
Now, applying Monotonicity, we get $\erase{e_1}{M} \evstar{\anystage} e_2$.

Otherwise, apply the Step Stability Lemma to get $\erase{f_1}{M} \red{n} \erase{f}{M}$.
It follows that $\erase{C^m_n \ang{f_1}}{M} \ev{n} \erase{C^m_n \ang{f}}{M}$;
that is, $\erase{e_1}{M} \ev{n} \erase{e}{M}$.
Using the induction hypothesis gives $\erase{e_1}{M} \ev{n} \erase{e}{M} \evfin{\anystage}{k'} e_2$, as required.
\end{proof}

\begin{example}
Recall that in Example~\ref{ex:marked_fun}, the result depended on $\hsec$ and $\isec$, but not $\lsec$.
Applying $\erase{-}{\{ \hsec, \isec \}}$ and evaluating the initial expression gives:
\[
(((\kw{fun} (x) \{ \isec : (\kw{fun} (y) \{ x \}) \})(\hsec : 1))( \blank ), \ee)
\evstar{0} (\isec : (\hsec : 1))
\]
That is, the result of evaluation is unchanged.
\end{example}

% !TEX root = slam-js-ni.tex
\subsection{0CFA for \slamjs}
\label{section:0cfa}
We use a context-insensitive, flow-insensitive control flow analysis
(0CFA~\cite{shivers88}) to approximate statically the set of values
to which individual expressions in a program may evaluate at runtime.

As far as 0CFA is concerned, the only non-standard feature of \slamjs
are its staging constructs. Roughly speaking, $\kw{box}$ and
$\kw{unbox}$/$\kw{run}$ act like function abstraction and application,
except that they use a dynamic (instead of static) scoping
discipline.\footnote{This intuition is made more precise in Choi et
  al.'s work on static analysis of staged programs~\cite{choi11},
  where staging constructs are translated into function abstraction
  and application; we prefer to work directly on the staged language
  for simplicity.}

We present two variants of 0CFA for \slamjs: a simple, but somewhat
imprecise formulation that does not distinguish like-named variables
bound by different abstractions, and a more complicated one that does.

\paragraph{Simple Analysis}
Following Nielson, Nielson and Hankin~\cite{nnh}, we formalise our
analysis by means of an \emph{acceptability judgement} of the form
$\Gamma,\varrho\models e$, where $\Gamma$ is an \emph{abstract
cache} associating sets of abstract values with labelled program
points, and $\varrho$ is an \emph{abstract environment} mapping
local variables and record fields to sets of abstract
values. Intuitively, the purpose of this judgement is to ensure that
$\Gamma(\ell)$ soundly over-approximates all possible values to which
the expression at program point $\ell$ can evaluate, and $\varrho$
does the same for variables and record fields.

More precisely, we assume that all expressions in the program are
labelled with labels drawn from a set $\Label$. An abstract cache is a
mapping $\Label\to\pow(\AbsVal)$ associating a set of abstract values
with every program point; similarly, an abstract environment
$\varrho\colon\AbsVar\to\pow(\AbsVal)$ maps abstract variables to sets of
abstract values, where an abstract variable is either a simple name
$x$ (representing a function parameter), or a field name of the form
$\ell.p$, where $\ell$ is a label representing a record, and $p$ is
the name of a field of that record.

Our domain of abstract values is mostly standard, with, e.g., an
abstract value $\mathtt{NULL}$ to represent the concrete $\kw{null}$
value, an abstract $\mathtt{NUM}$ value representing any number, and
abstract values $\mathtt{FUN}(x,e)$, $\mathtt{BOX}(e)$ and
$\mathtt{REC}(\ell)$ representing, respectively, a function value, a
quoted piece of code, and a record allocated at program point
$\ell$. (A complete definition of all our abstract domains is given in
Appendix~\ref{app:cfa}). For an abstract environment $\varrho$ and a label $\ell$ we define $\mathtt{proto}(\ell)_{\varrho}$ to be the smallest set $P\subseteq\Label$ such that
$\ell\in P$ and for every $p\in P$ and $\mathtt{REC}(\ell')\in\varrho(p.\proto)$ also $\ell'\in P$.

The acceptability judgement is now defined using syntax-directed
rules, some of which are shown in Figure~\ref{fig:acceptability part}
(the remaining rules, which are standard, are given in
Appendix~\ref{app:cfa}).

\begin{figure*}
\[
\begin{array}{lll}
\Gamma,\varrho\models k^\ell & \text{if} & \lceil k\rceil\in\Gamma(\ell) \\
\Gamma,\varrho\models x^\ell & \text{if} & \varrho(x)\subseteq\Gamma(\ell) \\
\Gamma,\varrho\models(\kw{box}\>e)^\ell & \text{if} & \Gamma,\varrho\models e \\
                                       & \text{and} & \exists \nu\in\Gamma(\ell).\Gamma,\varrho\models\nu\approx\kw{box}\>e \\
\Gamma,\varrho\models(\kw{unbox}\>e)^{\ell} & \text{if} & \Gamma,\varrho\models e \\
                                                  & \text{and} & \forall\mathtt{BOX}(e')\in\Gamma(\lbl(e)).\Gamma(\lbl(e'))\subseteq\Gamma(\ell) \\
\Gamma,\varrho\models(\kw{if}(e_1)\{e_2\}\kw{else}\{e_3\})^{\ell} & \text{if} & \Gamma,\varrho\models e_1 \land \Gamma,\varrho\models e_2 \land \Gamma,\varrho\models e_3 \\
                                                                                        & \text{and} & \Gamma(\lbl(e_2))\subseteq\Gamma(\ell) \land \Gamma(\lbl(e_3))\subseteq\Gamma(\ell)
\end{array}
\]
\caption{Some rules for the 0CFA acceptability judgement}\label{fig:acceptability part}
\end{figure*}

We write $t^\ell$ to represent an expression of the syntactic form
$t$, labelled with $\ell$. Thus, $k^\ell$ means an expression
consisting of a literal $k$ labelled $\ell$, and the first rule simply
says that in order for $\Gamma$ and $\varrho$ to constitute an
acceptable analysis of $k^\ell$, $\Gamma(\ell)$ must contain the
abstract value $\lceil k\rceil$ representing $k$. Similarly, the
second rule requires $\Gamma$ and $\varrho$ to be consistent in the
abstract values they assign to variables and references to them.  The
rules for dealing with function abstractions and records are standard
and so are elided here for brevity.

\begin{figure*}
\[\begin{array}{lll}
\Gamma,\varrho\models\lceil k\rceil\approx k & & \text{for any literal $k$} \\
\Gamma,\varrho\models\mathtt{FUN}(x,e)\approx \kw{fun}(x)\{e\} \\
\Gamma,\varrho\models\mathtt{BOX}(e)\approx \kw{box}\>e \\
\Gamma,\varrho\models\mathtt{REC}(\ell')\approx \{\overline{s:t^\ell}\} & \text{if} & \forall i.\exists \nu_i\in\varrho(\ell'.s_i). \Gamma,\varrho\models\nu_i\approx t_i\\
\Gamma,\varrho\models\nu\approx t' & \text{if} & \Gamma,\varrho\models \nu\approx t \land t^\ell\ev{n}t'^{\ell} \\
\Gamma,\varrho\models\nu\approx(t,\rho) & \text{if} & \Gamma,\varrho\models\nu\approx t \land \Gamma,\varrho\models\rho \\
\end{array}\]
\caption{The approximation judgement $\Gamma,\varrho\models\nu\approx t$}
\label{fig:approx-judgement}
\end{figure*}

The rule for $\kw{box} e$ requires $\Gamma$ and $\varrho$ to be an
acceptable analysis of the single sub-expression $e$, and for
$\Gamma(\ell)$ to include an abstract value $\nu$ approximating
$\kw{box} e$, which is written as
$\Gamma,\varrho\models\nu\approx\kw{box} e$. This judgement holds if
$\nu=\mathtt{BOX}(e)$, but we must be slightly more flexible: during
evaluation, unboxing may splice new code fragments into $e$, changing
its syntactic shape to some new expression $e'$. In order for
the flow analysis to be effectively computable, we want the set of
abstract values to be finite, so we cannot expect every such
$\mathtt{BOX}(e')$ to be part of our abstract domain. Instead, we
close the approximation judgement under reduction, that is, if
$\Gamma,\varrho\models\nu\approx t$ and $t^\ell\ev{n}t'^{\ell'}$, then
also $\Gamma,\varrho\models\nu\approx t'$; the full definition of the
approximation judgement appears in Figure~\ref{fig:approx-judgement}.

The rule for $\kw{unbox} e$ is surprisingly simple: all that is
required is that for any abstract value $\mathtt{BOX}(e')$ that the
analysis thinks can flow into $\lbl(e)$ (i.e., the label of expression
$e$) every abstract value flowing into its body $e'$ also flows into
the unboxing expression. Note that this models the name capture
associated with dynamic scoping, since our abstract environment
$\varrho$ does not distinguish between different variables of the
same name. The rule for $\kw{run}$ is the same as for $\kw{unbox}$.

Finally, we show the rule for $\kw{if}$, which is standard: any
abstract value that either of the branches can evaluate to is also a
possible result of the entire $\kw{if}$ expression.

To show this acceptability judgement makes sense, we prove its
coherence with evaluation:

\begin{theorem}[CFA Coherence]
If $\Gamma,\varrho\models e$ and $e\ev{n}e'$, then
$\Gamma,\varrho\models e'$.
\end{theorem}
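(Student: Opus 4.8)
The plan is to prove this as a preservation (subject reduction) result for the acceptability judgement. Since $e \ev{n} e'$ holds exactly when $e = C^m_n\langle f_1\rangle$ and $e' = C^m_n\langle f\rangle$ for some top-level redex $f_1 \red{n} f$, the single reduction step factors through a context and a contraction. I would therefore first establish a \emph{context lemma}: if $\Gamma,\varrho \models C^m_n\langle f_1\rangle$, then $\Gamma,\varrho \models f_1$, and moreover the constraints the surrounding context imposes on $\lbl(f_1)$ are such that any acceptable replacement $f$ whose abstract values at $\lbl(f)$ are already contained in $\Gamma(\lbl(f_1))$ lifts back to $\Gamma,\varrho \models C^m_n\langle f\rangle$. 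This is a routine induction over the grammar of contexts in Figure~\ref{fig:somecontexts}, and it reduces the theorem to the empty-context case, namely a single top-level reduction $f_1 \red{n} f$.

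The bulk of the argument is then a case analysis over the top-level rules. For the environment-propagation rules of Figure~\ref{fig:pushenv}, acceptability transfers almost immediately, since the clauses governing the $(t,\rho)$ forms are set up so that pushing $\rho$ inward preserves the relevant constraints; here the closure of the approximation judgement $\Gamma,\varrho\models\nu\approx t$ under reduction (Figure~\ref{fig:approx-judgement}) does the real work. For (Lookup) and (Apply) I would use that acceptability of the closure in the function position forces some $\mathtt{FUN}(x,e)$ into its cache entry, which in turn forces the argument's abstract values into $\varrho(x)$ and the body's values into the result label; the key auxiliary fact I would need is an \emph{environment lemma} stating that $\Gamma,\varrho \models (e,\rho)$ whenever $\Gamma,\varrho \models e$ and $\rho$ is consistent with $\varrho$, so that the contractum $(e,\rho[x \mapsto v])$ stays acceptable. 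The record rules (Read, Write, Del) and the branch rules (IfTrue, IfFalse) follow from their acceptability clauses, which already over-approximate every field read along the prototype chain and both branch outcomes.

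The main obstacle is the staging rules (Unbox) and (Run). The difficulty is that unboxing splices code and thereby changes the syntactic shape of the boxed expression, so we cannot rely on a $\mathtt{BOX}(e')$ in the cache matching the residual code syntactically. The reduction-closed approximation judgement is exactly what rescues this. From $\Gamma,\varrho \models (\kw{unbox}\>(\kw{box}\>v^1))^\ell$, the box clause supplies some $\nu \in \Gamma(\lbl(\kw{box}\>v^1))$ with $\Gamma,\varrho\models\nu\approx\kw{box}\>v^1$, and tracing this approximation back through the reductions that generated it yields $\nu = \mathtt{BOX}(e_0)$ for some $e_0$ whose body label feeds $\lbl(\kw{box}\>v^1)$; the unbox clause then gives $\Gamma(\lbl(e_0)) \subseteq \Gamma(\ell)$, from which I would show that the abstract values of the residual $v^1$ already lie in $\Gamma(\lbl(e_0))$, so that $\Gamma,\varrho \models (v^1)$ with its values landing in $\Gamma(\ell)$. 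The (Run) case is identical up to the extra environment wrapping, which the $(t,\rho)$ approximation clause absorbs. Finally, the marker-lift rules of Figure~\ref{fig:lifts} are orthogonal to the analysis, since markers carry no abstract-value content; acceptability is simply inherited by the substructure, and these cases together with the hole rule close the induction.
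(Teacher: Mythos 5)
The paper elides its own proof of this theorem entirely (deferring to a Coq formalisation), but your strategy---reduce to the empty-context case via a context lemma whose side condition is containment of the contractum's cache entry in the redex's, i.e.\ $\Gamma(\lbl(f))\subseteq\Gamma(\lbl(f_1))$, then do case analysis on the top-level rules, leaning on the reduction-closure of the approximation judgement $\Gamma,\varrho\models\nu\approx t$ for the environment-propagation and staging cases---is exactly the methodology the paper itself exhibits for the sibling result (Lemma~\ref{l:ReductionPreserves} and Theorem~\ref{l:EvaluationPreserves}, where the analogous side condition is $\ell_2\flowsto^*\ell_1$ and the context is likewise argued to interact with its hole only at the hole's label). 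Your sketch also correctly isolates the two genuinely non-routine points: that top-level reductions such as \rlname{Apply} and \rlname{IfTrue} change the redex's label, and that splicing changes the syntactic shape of boxed code so that $\mathtt{BOX}(\cdot)$ values can only be matched up to reduction; so the proposal is sound and matches the intended proof in structure.
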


The proof of this theorem is fairly technical and is elided here. A
full formalisation in Coq is available online in our supporting material.

Owing to its syntax-directed nature, the definition of the acceptability
relation can quite easily be recast as constraint rules; by generating
and solving all constraints for a given program, an acceptable flow
analysis can be derived.

Note that, while there may be infinitely many abstract values of the form
$\mathtt{BOX}(e)$ and $\mathtt{FUN}(e)$ that are relevant to a particular program,
the closure of the approximation judgement under reduction means that
the analysis need only consider those corresponding to subexpressions $e$ of the
original program, not those that may arise during execution.
That is, the analysis need only solve a finite set of constraints over a finite
set of abstract values and a finite set of labels and abstract variables,
so it can be guaranteed to terminate.

\begin{example}
Recall again Example~\ref{ex:marked_fun}. Our implementation of the analysis
labels the expression as follows:
\[
(((\kw{fun} (x) \{ (\isec : (\kw{fun} (y) \{ x^0 \})^1)^2 \})^3(\hsec : 1^4)^5)^6(\lsec : 2^7)^8)^9
\]
By generating and solving constraints it gives the following solution for $\Gamma$:
\[
\begin{array}{cclcclcclccl}
\lefteqn{0} & \lefteqn{\mapsto} & \{\mathtt{NUM}\} &
\lefteqn{1} & \lefteqn{\mapsto} & \{\mathtt{FUN}(y,(x)^0)\} &
\lefteqn{2} & \lefteqn{\mapsto} & \{\mathtt{FUN}(y,(x)^0)\} \\
\lefteqn{3} & \lefteqn{\mapsto} & \multicolumn{4}{l}{\{\mathtt{FUN}(x,((\isec :
(\kw{fun}(y)\{(x)^0\})^1)^2\lefteqn{)\}}} &
\lefteqn{4} & \lefteqn{\mapsto} & \{\mathtt{NUM}\} \\
\lefteqn{5} & \lefteqn{\mapsto} & \{\mathtt{NUM}\} &
\lefteqn{6} & \lefteqn{\mapsto} & \{\mathtt{FUN}(y,(x)^0)\} &
\lefteqn{7} & \lefteqn{\mapsto} & \{\mathtt{NUM}\} \\
\lefteqn{8} & \lefteqn{\mapsto} & \{\mathtt{NUM}\} &
\lefteqn{9} & \lefteqn{\mapsto} & \{\mathtt{NUM}\}
\end{array}
\]
while $\varrho = \{ x \mapsto \{\mathtt{NUM}\}, y \mapsto \{\mathtt{NUM}\} \}$.
As expected, the result of evaluation (labelled $9$) is a number.
\end{example}

\paragraph{Improved Analysis}
The analysis presented so far is not very precise, since abstract
environments do not distinguish identically named parameters
of different functions. Ordinarily, this is not a problem, as one
can rename them apart, but this is not possible for \slamjs, which
does not enjoy alpha conversion.

To restore analysis precision in the absence of alpha conversion, we introduce an
\emph{abstract context} $\Xi$ that keeps track of name bindings. In a
single-staged language, such an abstract context would simply map a
name $x$ to the innermost enclosing function abstraction whose
parameter is $x$. In a multi-staged setting, we need to distinguish
between bindings at different stages, hence the abstract context
maintains one such mapping per stage.
Thus $\Xi$ is a stack of frames, one for each stage;
a frame maps each variable name to the label of its binding context.

For instance, the two uses of $x$ in the \slamjs expression 
$\kw{fun}(x)\>\{\kw{box}(\kw{fun}(x)\>\{\>(\kw{unbox} x)(x)\>\})\}$
are at different stages, and hence bound by
different abstractions: the first $x$ by the outer abstraction, the
second by the inner one.

The acceptability judgement for the improved analysis is now of the
form $\Gamma,\varrho,\Xi\models e$, and the derivation rules
include additional bookkeeping to adjust $\Xi$ when analysing
subexpressions at different stages. While conceptually simple, this
change somewhat complicates the formalism, so we do not present it in
detail here; a full formalisation is available in the supporting material.

% !TEX root = slam-js-ni.tex
\subsection{Information Flow for \slamjs}
\label{section:if}

Assume we have already analysed a program using 0CFA
and found environments $\Gamma, \varrho$ that over-approximate the values
flowing to each labelled expression. We use information about which
functions and boxed values may occur to assist in determining what direct and indirect
flows occur between labels of the expression.

By recursing over the structure of an expression,
we generate constraints on a relation $\flowsto$:
\[
\flowsto \; : \; (\Label \uplus \Name \uplus \Marker) \rightarrow (\Label \uplus \Name \uplus \Marker)
\]
As an expression,
the labels, variable names and markers occurring within an expression
and the abstract values in the results of 0CFA for an expression
are all finite, the process will terminate.

We express constraints between labels, variable names and markers as either
direct flows ($x \direct y \implies x \flowsto y$)
or indirect flows ($x \indirect y \implies x \flowsto y$). %between labels, variable names and markers.
(The distinction between direct and indirect is for clarity of exposition;
there is no practical difference between them with regard to the resulting analysis.)

Note that if instead we interpret $x \direct y$ and $x \indirect y$
as (elements of) relations and define ${\flowsto} = {\direct \cup \indirect}$,
then $\flowsto$ satisfies the constraints.

We say that $\Gamma, \varrho, \flowsto \ifmodels e$ if
$\Gamma, \varrho \models e$ and the conditions in Figure~\ref{fig:ifrules} hold.
As $\Gamma, \varrho$ and $\flowsto$ are constant
throughout the definition, we abbreviate
$\Gamma, \varrho, \flowsto \ifmodels e$ to $\ifmodels e$
for clarity.

\begin{figure*}
{
\[
\begin{array}{| l | l | l | l |}
\hline
\mbox{\emph{Expression} $e$} & \mbox{\emph{Subexpressions}} & \mbox{\emph{Direct Flows}} & \mbox{\emph{Indirect Flows}} \\
\mbox{$\ifmodels e$ holds:} & \mbox{if:} & \mbox{and:} & \mbox{and:} \\
\hline
\hline
k^\ell & - & - & - \\[0.3ex]
x^\ell & - & x \direct \ell & - \\[0.3ex]
\kw{fun}(x) \{ t^{\ell_1} \}^{\ell_2} & \ifmodels t^{\ell_1} & - & - \\[0.3ex]
(t_1^{\ell_1} (t_2^{\ell_2}))^\ell & \ifmodels t_1^{\ell_1} \wedge \ifmodels t_2^{\ell_2} &
	\forall \mathtt{FUN}(x,t_3^{\ell_3})\in\Gamma(\ell_1) \, . \ell_2 \direct x \wedge \ell_3 \direct \ell
								& \ell_1 \indirect \ell \\[0.3ex]
(\kw{if}(t_1^{\ell_1})\{ t_2^{\ell_2} \} \kw{else} \{ t_3^{\ell_3}\} )^{\ell_4} &
	\bigwedge_{i=1}^3 \ifmodels t_i^{\ell_i} &
	\ell_2 \direct \ell_4 \wedge \ell_3 \direct \ell_4 & \ell_1 \indirect \ell_4 \\[0.3ex]
(t_1, \rho)^{\ell_1} & \ifmodels {t_{1}}^{\ell_1} \wedge \bigwedge_{(x \mapsto t^\ell) \in \rho} \ifmodels t^\ell & - & - \\[0.3ex]
(\mm : t^{\ell_1})^{\ell_2} & \ifmodels t^{\ell_1} & \ell_1 \direct \ell_2 \wedge \mm \direct \ell_2 & -\\[0.3ex]
(t_1^{\ell_1}[t_2^{\ell_2}])^\ell & \ifmodels t_1^{\ell_1} \wedge \ifmodels t_2^{\ell_2} &
	\forall \mathtt{REC}(\ell') \in \Gamma(\ell_1) \, . \forall \ell'' \in \mathtt{proto}(\ell')_{\varrho} \, . & \ell_1 \indirect \ell \\
& &	\quad \forall s \, . \ell''.s \direct \ell & \ell_2 \indirect \ell \\[0.3ex]
\{ s_1 : t_1^{\ell_1} , \ldots , s_n : t_n^{\ell_n} \}^\ell & \bigwedge_{i=1}^n \ifmodels e_n &
	\exists \mathtt{REC}(\ell') \in \Gamma(\ell) \, . \forall i \, . \ell_i \direct \ell' . s_i & - \\[0.3ex]
(t_1^{\ell_1}[t_2^{\ell_2}] = t_3^{\ell_3})^\ell & \bigwedge_{i=1}^3 \ifmodels t_i^{\ell_i} &
	\ell_1 \direct \ell \wedge \forall \mathtt{REC}(\ell') \in \Gamma(\ell_1) \, . \forall s \, . \ell_3 \direct \ell'.s
		& \ell_2 \indirect \ell \\[0.3ex]
(\kw{del}\>t_1^{\ell_1}[t_2^{\ell_2}])^\ell & \ifmodels t_1^{\ell_1} \wedge \ifmodels t_2^{\ell_2} &
	\ell_1 \direct \ell & \ell_2 \indirect \ell \\[0.3ex]
(\kw{box}\>t^{\ell_1})^{\ell_2} & \ifmodels t^{\ell_1} & - & - \\[0.3ex]
(\kw{unbox}\>t^{\ell_1})^{\ell_2} & \ifmodels t^{\ell_1} &
	\forall \mathtt{BOX}(t'^{\ell'}) \in \Gamma(\ell_1) \, . \ell' \direct \ell_2 & \ell_1 \indirect \ell_2 \\[0.3ex]
(\kw{run}\>t^{\ell_1})^{\ell_2} & \ifmodels t^{\ell_1} &
	\forall \mathtt{BOX}(t'^{\ell'}) \in \Gamma(\ell_1) \, . \ell' \direct \ell_2 & \ell_1 \indirect \ell_2 \\[0.3ex]
(\kw{run}\>t^{\ell_1} \kw{in}\>\rho)^{\ell_2} & \ifmodels t^{\ell_1} \wedge \ifmodels \rho &
	\forall \mathtt{BOX}(t'^{\ell'}) \in \Gamma(\ell_1) \, . \ell' \direct \ell_2 & \ell_1 \indirect \ell_2 \\[0.3ex]
\hline
\end{array}
\]
}
\caption{Rules for generating information flow constraints}
\label{fig:ifrules}
\end{figure*}

We now prove the coherence of our information flow analysis with evaluation.
Like the corresponding proof for our 0CFA, this is lengthy and
technical, so we only sketch it here.
A mechanisation of the proof is available online.

%\subsubsection{Top-Level Reduction Preserves Constraint Satisfaction}

\begin{lemma}[Reduction Preserves Satisfaction]
\label{l:ReductionPreserves}
If we have $\Gamma, \varrho, \flowsto \ifmodels t_1^{\ell_1}$ and also $t_1^{\ell_1} \red{n} t_2^{\ell_2}$,
then $\Gamma, \varrho, \flowsto \ifmodels t_2^{\ell_2}$.
Furthermore, $\ell_2 \flowsto^* \ell_1$.
\end{lemma}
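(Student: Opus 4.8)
The plan is to argue by case analysis on the top-level reduction rule witnessing $t_1^{\ell_1} \red{n} t_2^{\ell_2}$, since each such rule is an axiom (none has a reduction premise). In every case I would split the goal $\Gamma,\varrho,\flowsto \ifmodels t_2^{\ell_2}$ into its two conjuncts: the underlying $\Gamma,\varrho \models t_2^{\ell_2}$ is supplied directly by CFA Coherence (a top-level $\red{n}$ step is the instance of $\ev{n}$ under the empty context), so the real work is re-establishing the flow constraints of Figure~\ref{fig:ifrules} for the reduct and proving $\ell_2 \flowsto^* \ell_1$. A key simplification is that $\flowsto$ is a single, fixed, global relation satisfying all constraints generated from the original program; hence ``preserving satisfaction'' never means regenerating constraints, only checking that the structural constraints associated with $t_2$'s shape already hold of this fixed $\flowsto$, while $\ell_2 \flowsto^* \ell_1$ is just a chain within it. In each case I would first invert $\ifmodels t_1^{\ell_1}$ to extract $\ifmodels$ for the relevant subexpressions together with the flows recorded for the redex.

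The structural cases are routine. For the environment-propagation rules (e.g. $(e_1(e_2),\rho) \red{n} ((e_1,\rho)((e_2,\rho)))$) and the lift rules of Figure~\ref{fig:lifts}, the reduct keeps the redex's label, so $\ell_2 = \ell_1$ and $\ell_2 \flowsto^* \ell_1$ holds reflexively; satisfaction follows by reassembling the subexpression judgements, pushing $\ifmodels \rho$ into each copy of the body, and observing that the marker flows $\ell_1 \direct \ell_2$ and $\mm \direct \ell_2$ demanded by the $(\mm : t)$ row survive when a marker is lifted outward. For the branch rules \textsc{IfTrue}/\textsc{IfFalse}, $\ifmodels$ of the chosen branch is immediate from subexpression satisfaction, and $\ell_2 \flowsto^* \ell_1$ is a single step from the recorded direct flow $\ell_2 \direct \ell_4$ (resp. $\ell_3 \direct \ell_4$).

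The substantive cases route the flow through an abstract intermediary and rely on an invariant about stored values. For \textsc{Lookup} ($(x,\rho) \red{0} \rho(x)$) the reduct is the bound value at some label $\ell_v$, and the required chain $\ell_v \flowsto^* \ell_1$ factors as $\ell_v \flowsto x \flowsto \ell_1$: the second step is the constraint $x \direct \ell_1$ from the variable occurrence, while the first is the invariant that every binding $x \mapsto t^\ell$ carries a flow $\ell \flowsto x$. This invariant is created by \textsc{Apply}, whose application row supplies exactly $\ell_2 \direct x$ for the freshly bound argument (and $\ell_3 \direct \ell$ for the body, which yields the reduct-to-redex flow for \textsc{Apply} itself), and is merely duplicated by the propagation rules; I would therefore fold it into the environment clause of $\ifmodels$ so that it travels with $t_2$. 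The record rules \textsc{Read}/\textsc{Write}/\textsc{Del} are analogous, with the abstract field $\ell'.s$ playing the role of $x$: construction of a record establishes $\ell_i \direct \ell'.s_i$, and a read chains $\ell_{v_i} \flowsto \ell'.s_i \flowsto \ell$ using the membership $\mathtt{REC}(\ell') \in \Gamma(\ell_1)$ furnished by the CFA. The \textsc{Unbox}/\textsc{Run} cases are similar but mediated by $\mathtt{BOX}$ values: from $\mathtt{BOX}(t'^{\ell'}) \in \Gamma(\ell_1)$ the row yields $\ell' \direct \ell_2$, and the extracted code $t'$ satisfies $\ifmodels$ because the approximation judgement is closed under reduction, so its constraints are among those already generated.

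I expect the main obstacle to be exactly this stored-value invariant and its interface with the CFA. Two points require care. First, to reconstruct $\ifmodels$ for the extracted body in \textsc{Apply} and for the extracted code in \textsc{Unbox}/\textsc{Run}, I must recover that body/code from the $\mathtt{FUN}$/$\mathtt{BOX}$ value living in $\Gamma$ and argue that the matching direct-flow constraint was indeed generated for it during the original constraint pass; this is where the closure of the approximation judgement under reduction (Figure~\ref{fig:approx-judgement}) and CFA Coherence must be combined carefully. Second, the prototype-chase rule \textsc{Read}2 is delicate: its reduct reads the $\proto$ record, and the flow must stay inside the transitive set $\mathtt{proto}(\ell')_\varrho$, so I would verify that the universal quantification over $\ell'' \in \mathtt{proto}(\ell')_\varrho$ in the read row is precisely what lets the same direct flows $\ell''.s \direct \ell$ cover the reduct, and that $\mathtt{proto}$ is preserved as the chain is followed.
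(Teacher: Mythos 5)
Your overall strategy---case analysis on the rules defining $\red{n}$, discharging the underlying $\Gamma,\varrho\models t_2^{\ell_2}$ conjunct via CFA Coherence---is exactly the paper's approach; indeed the paper's entire stated proof is that one sentence, with all details deferred to the Coq mechanisation. But your elaboration of the hard cases does not match the paper's actual definitions, and the mismatch is substantive. In the labelled semantics (Figure~\ref{fig:labelled_proper_reduction_rules}), the rules \textsc{Lookup}, \textsc{Unbox}, \textsc{Run} and \textsc{Read}1 \emph{relabel} the extracted value with the redex's label: $(x,\rho)^{\ell_1}\red{0}v^{\ell_1}$, where $v^{\ell_1}$ denotes the term of $v$ carrying the fresh label $\ell_1$. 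So in these cases $\ell_2=\ell_1$ and the ``furthermore'' clause is trivial; the real obligation is the one your plan does not address, namely re-establishing the \emph{value's own} row constraints of Figure~\ref{fig:ifrules} at the new label. Concretely, if $\rho(x)=(\mm:\kw{true}^{\ell_3})^{\ell_v}$, the reduct $(\mm:\kw{true}^{\ell_3})^{\ell_1}$ requires the single-step flows $\ell_3\direct\ell_1$ and $\mm\direct\ell_1$, whereas the hypotheses (even strengthened with your stored-value invariant $\ell_v\flowsto x$) supply only $\ell_3\direct\ell_v$, $\mm\direct\ell_v$, $\ell_v\flowsto x$ and $x\direct\ell_1$---a transitive chain, not the single-step membership in $\flowsto$ that the row literally demands.

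The same single-step-versus-transitive gap appears in the lift cases, which you classify as routine: \textsc{Lift-If} (Figure~\ref{fig:labelled_lifts}) rewrites $(\kw{if}((\mm:v)^{\ell_0})\{t_1^{\ell_1}\}\kw{else}\{t_2^{\ell_2}\})^{\ell}$ to $(\mm:(\kw{if}(v)\{t_1^{\ell_1}\}\kw{else}\{t_2^{\ell_2}\})^{\ell})^{\ell}$, whose marked-expression row demands $\mm\direct\ell$ (and even $\ell\direct\ell$), while the redex yields only $\mm\direct\ell_0$ and $\ell_0\indirect\ell$. So the case analysis closes only if the obligations of Figure~\ref{fig:ifrules} are interpreted modulo $\flowsto^*$ (equivalently, if $\flowsto$ is required to be reflexively and transitively closed), or if the judgement itself is restated; your proposal never confronts this, and your invariant alone cannot produce the required single-step flows. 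Relatedly, folding the stored-value invariant into the environment clause of $\ifmodels$ means you are proving a strengthened statement rather than the lemma as written, so you would additionally owe that the strengthened judgement holds of the initial program and is preserved by \emph{every} rule, not merely created by \textsc{Apply}. In short: you correctly identified the stored-value invariant as a crux, but as written your \textsc{Lookup}, \textsc{Unbox}/\textsc{Run} and lift cases would fail against the paper's literal definitions, and the missing ingredient (closure of $\flowsto$, or reading the rows up to $\flowsto^*$) needs to be made explicit.
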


\begin{proof}
By case analysis on the rules defining $\red{n}$.
\end{proof}

%\subsubsection{Reduction in Context Preserves Constraint Satisfaction}

\begin{theorem}[Information Flow Coherence]
\label{l:EvaluationPreserves}
If we have $\Gamma, \varrho, \flowsto \ifmodels e_1$ and also $e_1 \ev{m} e_2$,
then $\Gamma, \varrho, \flowsto \ifmodels e_2$.
Furthermore, $\lbl(e_2) \flowsto^* \lbl(e_1)$.
\end{theorem}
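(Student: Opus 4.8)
The plan is to reduce the theorem about evaluation steps $e_1 \ev{m} e_2$ to the already-established Lemma~\ref{l:ReductionPreserves} about top-level reductions $\red{n}$ by inducting over the structure of the evaluation context. Recall that the single rule for $\ev{m}$ says that $e_1 \ev{m} e_2$ exactly when $e_1 = C^m_n\langle f_1\rangle$ and $e_2 = C^m_n\langle f_2\rangle$ for some context $C^m_n$ and some top-level reduction $f_1 \red{n} f_2$. So I would first decompose $e_1$ and $e_2$ in this form, and then argue by induction on the structure of $C^m_n$.

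The base case is $C^m_n = [\,]$, where $e_1 = f_1$, $e_2 = f_2$, and $f_1 \red{n} f_2$; here the result is precisely Lemma~\ref{l:ReductionPreserves}, giving both $\ifmodels e_2$ and $\lbl(e_2) \flowsto^* \lbl(e_1)$. For the inductive step I would consider each context-forming production from Figure~\ref{fig:somecontexts} together with the marker-context extension $(\mm : C^m_n)$. In each case the redex sits inside one immediate subexpression, so I would invert the relevant rule of the acceptability-plus-information-flow judgement from Figure~\ref{fig:ifrules} to extract $\ifmodels$ for that subexpression, apply the induction hypothesis to the sub-reduction (obtaining both preservation of satisfaction and the flow relation $\flowsto^*$ on the sub-labels), and then re-assemble the judgement for the whole expression. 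The crucial observation making re-assembly go through is that the constraints in Figure~\ref{fig:ifrules} are stated \emph{universally} over the abstract values in $\Gamma(\ell)$ and depend only on the labels, not on the concrete syntactic shape of the subexpression; since CFA Coherence guarantees $\Gamma,\varrho \models e_2$, the same $\Gamma,\varrho$ witnesses acceptability after the step, and the flow constraints for the unchanged siblings are untouched.

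To propagate the flow conclusion $\lbl(e_2) \flowsto^* \lbl(e_1)$ up through the context, I would track how the top-level label of each subexpression relates to the label of its enclosing expression. In most context productions the hole's label is connected to the enclosing label by a direct or indirect flow constraint recorded in Figure~\ref{fig:ifrules} (e.g. $\ell_2 \direct \ell$ for the branches of an \kw{if}, or $\ell' \direct \ell_2$ for \kw{unbox}); composing the induction hypothesis $\lbl(f_2) \flowsto^* \lbl(f_1)$ with these constraints and using transitivity of $\flowsto^*$ yields $\lbl(e_2) \flowsto^* \lbl(e_1)$. For contexts that do not change the outer label at all (for instance the function body or the marker context, where the enclosing label is fixed), the label is literally unchanged and the flow conclusion is immediate.

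The main obstacle I anticipate is the interaction between evaluation contexts that straddle stage boundaries and the stage-indexed information-flow rules. Contexts such as $(\kw{box}\>C^{m+1}_n)$ and $(\kw{unbox}\>C^m_n)$ shift the stage at which the hole sits, so the inductive hypothesis must be applied at the correct stage, and I must check that the information-flow judgement is insensitive to this shift in the same way the acceptability judgement is. I expect the resolution to mirror the handling in the 0CFA Coherence proof: because $\flowsto$ is a single relation over labels (not stratified by stage) and the constraints for \kw{box}/\kw{unbox}/\kw{run} quantify over the abstract $\mathtt{BOX}$ values supplied by $\Gamma$ rather than over syntactic stage structure, the stage bookkeeping affects only which CFA facts are available — and those are already guaranteed by CFA Coherence — not the form of the flow constraints themselves. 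Verifying this compatibility case-by-case, especially for the splicing performed by \textsc{Unbox} and \textsc{Run}, is the technically delicate part, which is why the full argument is mechanised in Coq rather than carried out by hand.
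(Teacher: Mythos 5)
Your proposal is correct and takes essentially the same route as the paper: decompose $e_1 \ev{m} e_2$ into a context plus a top-level redex, apply Lemma~\ref{l:ReductionPreserves} at the redex, and then argue that the constraints contributed by the context remain satisfied because they touch the redex only at its label --- your explicit induction over the structure of $C^m_n$ (with inversion and re-assembly of the $\ifmodels$ rules, and CFA Coherence supplying the $\Gamma,\varrho\models e_2$ component) is just the rigorous scaffolding for the paper's one-line observation that ``constraints generated by $C^m_n$ and the contents of its hole interact only at that hole.'' One correction on the final claim: the composition argument you propose is unnecessary and slightly misdirected, since composing $\lbl(f_2) \flowsto^* \lbl(f_1)$ with a constraint from the hole to the enclosing label would relate the \emph{redex's} new label to $\lbl(e_1)$, not $\lbl(e_2)$ to $\lbl(e_1)$; in fact \emph{every} non-empty context (not just the function-body and marker cases you single out) carries a fixed outer label, so $\lbl(e_2) = \lbl(e_1)$ and the claim is immediate by reflexivity, with the lemma needed only for the empty context --- which is exactly how the paper dispatches it.
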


\begin{proof}
\emph{Sketch:} Unfolding the definition of $\ev{m}$,
we let $e_1 = C^m_n \langle t_1^{\ell_1} \rangle$ and $e_2 = C^m_n \langle t_2^{\ell_2} \rangle$
with $t_1^{\ell_1} \red{n} t_2^{\ell_2}$.

Observe that $\Gamma, \varrho, \flowsto \ifmodels t_1^{\ell_1}$ and hence,
applying Lemma~\ref{l:ReductionPreserves}, $\Gamma, \varrho, \flowsto \ifmodels t_2^{\ell_2}$,
with $\ell_2 \flowsto^* \ell_1$.
Observe further that constraints generated by $C^m_n$ and the contents of its hole
interact only at that hole, labelled $\ell_2$ or $\ell_1$.
Thus, using $\ell_2 \flowsto^* \ell_1$, they must be satisfied in the conclusion,
giving $\Gamma, \varrho\, \flowsto \ifmodels C^m_n \langle t_2^{\ell_2} \rangle$
as required.

The claim that $\lbl(e_2) \flowsto^* \lbl(e_1)$ is trivial for all non-empty contexts,
as $\lbl(e_2) = \lbl(e_1)$.
For the empty context, it follows directly from the similar claim in Lemma~\ref{l:ReductionPreserves}.
\end{proof}

\begin{example}
Recall once more Example~\ref{ex:marked_fun}. Using the results of 0CFA,
our implementation generates the relations $\direct$ and $\indirect$
as depicted in Figure~\ref{fig:ifgraph}.
\begin{figure}
$
\begin{array}{cccccccccccccc}
4 & \direct & 5 & \direct & x & \direct & 0 & \sedirect & & \hspace{1em} & 7 & \direct & 8 & \direct y \\
\hsec & \nedirect & \isec & \sedirect & 3 & \indirect & 6 & \indirect & 9 & & \lsec & \nedirect \\
& & 1 & \direct & 2 & \nedirect
\end{array}
$
\caption{Information flow constraints for Example~\ref{ex:marked_fun}}
\label{fig:ifgraph}
\end{figure}

Setting $\flowsto = \direct \cup \indirect$,
we have $\hsec \flowsto^* 9$ and $\isec \flowsto^* 9$ and $\lsec \not{\!\!\!{\flowsto}^{*}} 9$.
As expected, this means the result (labelled $9$) has information flows from $\hsec$ and $\isec$,
but not $\lsec$.
\end{example}

% !TEX root = slam-js-ni.tex

\subsection{Information Flow Soundness}
\label{section:soundness}

\begin{theorem}[Information Flow Soundness]
Suppose $\Gamma, \varrho, \flowsto \ifmodels t^\ell$.
Then if $t^\ell \evstar{\anystage} v^{\ell'}$,
where $v$ is a stage-0 value composed only of markers and constants,
then $\erase{v}{M} = v$ where $M = \{ \mm \in Marker \mid \mm \flowsto^* \ell \}$.
%and $M = \{ \mm \in Marker \mid \mm \flowsto^* \ell \}$, it follows that $\erase{v}{M} = v$.
\end{theorem}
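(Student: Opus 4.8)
The plan is to derive the statement almost entirely from the Information Flow Coherence theorem (Theorem~\ref{l:EvaluationPreserves}), closing it off with a short structural induction on the shape of $v$. First I would iterate Information Flow Coherence along the reduction sequence $t^\ell \evstar{\anystage} v^{\ell'}$. Writing this sequence as $t^\ell = e_0 \ev{\anystage} e_1 \ev{\anystage} \cdots \ev{\anystage} e_k = v^{\ell'}$ and arguing by induction on $k$, each step preserves satisfaction and contributes one link $\lbl(e_{i+1}) \flowsto^* \lbl(e_i)$; composing these links yields both $\Gamma, \varrho, \flowsto \ifmodels v^{\ell'}$ and $\ell' \flowsto^* \ell$. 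This reduces the theorem to a claim purely about the \emph{accepted} value $v^{\ell'}$ and the flow relation, discharging all of the dynamic content of the argument.

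Next I would prove the following strengthened claim by structural induction on $v$, whose grammar here is just $v ::= k \mid (\mm : v)$, since $v$ is composed only of markers and constants: if $\Gamma, \varrho, \flowsto \ifmodels w^{\ell_w}$ and $\ell_w \flowsto^* \ell$, then every marker occurring in $w$ lies in $M = \{ \mm \mid \mm \flowsto^* \ell \}$, and hence $\erase{w}{M} = w$. The base case $w = k^{\ell_w}$ is immediate, as a constant contains no markers. In the inductive case $w = (\mm : w'^{\ell'_w})^{\ell_w}$, the marker rule of Figure~\ref{fig:ifrules} supplies both $\mm \direct \ell_w$ and $\ell'_w \direct \ell_w$, together with the sub-judgement $\ifmodels w'^{\ell'_w}$. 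From $\mm \direct \ell_w$ and $\ell_w \flowsto^* \ell$ I obtain $\mm \flowsto^* \ell$, i.e.\ $\mm \in M$; from $\ell'_w \direct \ell_w$ and $\ell_w \flowsto^* \ell$ I obtain $\ell'_w \flowsto^* \ell$, so the induction hypothesis applies to $w'^{\ell'_w}$ and shows every remaining marker is in $M$. Applying this claim to $v^{\ell'}$ with the two facts assembled in the first step then gives $\erase{v}{M} = v$, as required.

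The real work has already been carried out inside Information Flow Coherence, so the two genuine points of care lie elsewhere. The first is merely that per-step links compose: this is immediate since $\flowsto^*$ is a reflexive, transitive closure and hence a preorder. The second, and the one I would flag most prominently, is the hypothesis that $v$ is composed only of markers and constants. This restriction is essential rather than cosmetic: for a record value the governing rule routes each field marker to a field slot $\ell'.s_i$ rather than to the record's own label, so markers buried in record fields need not satisfy $\mm \flowsto^* \ell$ and the clean conclusion $\erase{v}{M} = v$ would fail outright; excluding records---and likewise closures and boxes, whose captured code and environments cannot be collapsed by erasure---is exactly what leaves the marker spine as the only structure the induction must traverse, reducing the structural step to the single marker case above.
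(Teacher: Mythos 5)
Your proof is correct and follows essentially the same route as the paper's: first iterate Information Flow Coherence (Theorem~\ref{l:EvaluationPreserves}) along the reduction sequence to obtain $\Gamma, \varrho, \flowsto \ifmodels v^{\ell'}$ and $\ell' \flowsto^* \ell$, then argue from the constraint rules that every marker in a markers-and-constants value flows to its top label and hence lies in $M$. Your explicit structural induction on the marker spine simply spells out what the paper compresses into ``by examination of the $\ifmodels$ constraint rules,'' and your closing observation about why records must be excluded is a correct reading of the record rule in Figure~\ref{fig:ifrules}.
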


%\lo{In the statement of the theorem above, $t^\ell$ in $t^\ell \evstar{\anystage} v$ should be $t$, without label.}
%\ml{As we have both labelled and unlabelled semantics, $\evstar{\anystage}$
%is ``overloaded'' to mean evaluation in both.
%It's true that, if an evaluation sequence exists in the labelled semantics,
%then the corresponding unlabelled sequence also exists, but the converse is not necessarily true.
%As the proof that follows uses labels, it makes more sense to talk about labelled evaluation here.}
%\lo{I see. Labels were introduced after the reduction semantics. I don't think we have properly explained the labelled version of the semantics.}
%\ml{$v$ is a labelled value (as $e$ might be a labelled expression), hence the reference to $\lbl(v)$ below.
%Would it look better to you if $v$ had an explicit label?} \lo{Yes. It would be clearer.}

\begin{proof}
First show that $\Gamma, \varrho, \flowsto \ifmodels v^{\ell'}$
with $\ell' \flowsto^* \ell$.
Argue by a simple induction over the derivation of $t^\ell \evstar{\anystage} v$.

\emph{Base case:} $\Gamma, \varrho, \flowsto \ifmodels t^\ell$
follows immediately from the theorem's premise.

\emph{Inductive step:}
Assume that $\Gamma, \varrho, \flowsto \ifmodels e_1$ and $\lbl(e_1) \flowsto^* \ell$,
with $e_1 \ev{\anystage} e_2$ the next step in the derivation.
Apply Theorem~\ref{l:EvaluationPreserves} to show that
$\Gamma, \varrho, \flowsto \ifmodels e_2$ and $\lbl(e_2) \flowsto^* \lbl(e_1)$;
hence $\lbl(e_2) \flowsto^* \ell$.

Now we have $\Gamma, \varrho, \flowsto \ifmodels v$
and $\ell' \flowsto^* \ell$.
Observe from the definition of $\erase{v}{M}$ that
if for every marker $\mm$ that occurs in $v$ we have $\mm \in M$,
then $\erase{v}{M} = v$.

But $v$ is a value composed only of markers and constants,
so for every marker $\mm $ that occurs in $v$
(by examination of the $\ifmodels$ constraint rules)
it must be the case that $\mm  \flowsto^* \ell'$.
Thus, as $\ell' \flowsto^* \ell$, $\mm  \flowsto^* \ell$.
Hence, from the definition of $M$, $\mm  \in M$.
So it is indeed true that $\erase{v}{M} = v$.
\end{proof}

% !TEX root = slam-js-ni.tex
\subsubsection*{Relationship with Noninterference}

%Suppose the inputs and outputs of a program are partitioned into those that are
%\emph{high-security} and those that are \emph{low-security}. The program satisfies
%\emph{noninterference} if the low-security outputs are unaffected by the values of
%the high-security inputs.
%cite origin of ni
%That is, an attacker who can only observe low-security outputs cannot
%determine anything about the high-security inputs.

Our information flow analysis can be used to verify the security property noninterference.
Noninterference asserts that the values of any ``high-security'' inputs must not affect
the values of any ``low-security'' outputs.
In order for this assertion to be meaningful, we must have notions of input, output
and high- and low-security levels.

For example, assume elements of $\Marker$ represent different levels of security,
such as $\lsec$ for low security and $\hsec$ for high security.
For input, assume two relations $\lowin$ and $\highin$, which take an expression and set the values of
low and high inputs respectively. 
For low-security output, just take the value to which an expression evaluates.

Say that expression $t^\ell$ satisfies noninterference analysis if $\Gamma, \varrho, \flowsto \ifmodels t^\ell$
and $\hsec \not{\!\!\!{\flowsto}^{*}} \ell$.
Further, require that $\lowin$ and $\highin$ satisfy the following conditions:
\[
\begin{array}{rcl}
\Gamma, \varrho, \flowsto \ifmodels t^\ell \wedge t \lowin t' &
	\implies & \Gamma, \varrho, \flowsto \ifmodels t'^\ell \\
\Gamma, \varrho, \flowsto \ifmodels t^\ell \wedge t \highin t' &
	\implies & \Gamma, \varrho, \flowsto \ifmodels t'^\ell \\
\Gamma, \varrho, \flowsto \ifmodels t^\ell \wedge t \highin t' &
	\implies & \erase{t}{\lowset} = \erase{t'}{\lowset}
\end{array}
\]
\emph{Claim:} If $t^\ell$ satisfies noninterference analysis, then in the following situation:
\[
\begin{array}{ccccccc}
t^\ell \lowin t'^\ell & \hspace{1em} &
t'^\ell \highin t_1^\ell & \hspace{1em} &
t'^\ell \highin t_2^\ell & \hspace{1em} &
t_1^\ell \evstar{\anystage} u^{\ell'}
\end{array}
\]
where $u^{\ell'}$ is a value composed only of markers and constants,
it follows that $t_2^\ell \evstar{\anystage} u^{\ell'}$.
That is, the low output $u$ is independent from
the values of the high inputs for $t$ selected using $\highin$.

\emph{Proof:}
By the condition on $\lowin$, observe we have $\Gamma, \varrho, \flowsto \ifmodels t'$.
By the first condition on $\highin$, it then follows that $\Gamma, \varrho, \flowsto \ifmodels t_1^\ell$
and $\Gamma, \varrho, \flowsto \ifmodels t_2^\ell$.
As $\hsec \not{\!\!\!{\flowsto}^{*}} \ell$, by soundness of information flow,
we have $u = \erase{u}{\lowset}$.
So using stability, we get $\erase{t_1}{\lowset} \evstar{\anystage} u$.
But, by the second condition on $\highin$, we have
$\erase{t'}{\lowset} = \erase{t_1}{\lowset} = \erase{t_2}{\lowset}$.
So $\erase{t_2}{\lowset} \evstar{\anystage} u$.
Then by monotonicity, $t_2 \evstar{\anystage} u$.

The conditions on $\lowin$ and $\highin$ seem reasonable.
As an example, $\lowin$ and $\highin$ that can only replace constants marked as $\lsec$ and $\hsec$ respectively
and can only replace them with constants of the same type (integer, boolean or string) satisfy these conditions.

\section{Evaluation}\label{sec:implementation}

We have implemented our analysis in OCaml and tested it on a range of
examples.
The source code for our analysis tool and the examples are available online.
We now present some of these examples.

For each example, we list the markers on which our analysis says the result
may depend.
To improve readability, we write $\kw{let}\>x = v \kw{in}\>e$
as a shorthand for $\kw{fun}(x) \{ e \} v$.
Our implementation extends \slamjs (and its analysis) as presented in this paper with
primitive arithmetic, equality and \kw{typeof} operators,
which we use in some of our examples.
It can also handle mutable references in the style of \lambdajs\ and a subset of actual JavaScript syntax.
Many of our examples are inspired by patterns of $\kw{eval}$ usage
common in Web applications, as surveyed by Richards et
al.~\cite{richards11} and discussed by Jensen et al.~\cite{jensen12}.
\vspace{0.5\baselineskip}
%\vspace{0.5\baselineskip}
%
%\begin{wrapfigure}[3]{l}{20em}
%\vspace{-2.5\baselineskip}
\begin{examplelist}
\emph{Depends on}: $\highsec, \lsec$. \\
%\mbox{} \\
$\kw{if}(\hsec : \kw{true}) \{ \lsec : \kw{false} \} \kw{else} \{ 1\}$
\\
%\end{wrapfigure}
%\noindent
We begin with a classic example where branching on a value introduces an indirect flow from it.
As our analysis does not track specific boolean values,
it would give the same result if the branch were on $(\hsec : \kw{false})$.
We could resolve this imprecision by extending our abstract value domain with abstract values for $\kw{true}$ and $\kw{false}$.
\end{examplelist}
\vspace{0.5\baselineskip}
%\begin{wrapfigure}[5]{l}{20em}
%\vspace{-2.5\baselineskip}
\begin{examplelist}
\emph{Depends on}: $\hsec, \isec, \lsec$. \emph{Depends (improved)}: $\hsec, \lsec$. \\
%\mbox{} \\
$\kw{let}\> \vn{ctrue} = \kw{fun} (x) \{ \kw{fun} (y) \{ x \} \} \kw{in} \\
\kw{let}\> \vn{cif} = \kw{fun} (x) \{ \kw{fun} (y) \{ \kw{fun} (z) \{ (x(y))(z) \} \} \} \kw{in} \\
((\vn{cif}(\hsec : \vn{ctrue}))(\lsec:\kw{false}))(\isec:1) )$
\\
%\end{wrapfigure}
%\noindent
Conversely, if we present the previous example using the standard Church-encodings of $\kw{if}$ and $\kw{true}$
as functions, our analysis is precise enough to determine that the result does not depend on $\isec$.
Note that we need the improved analysis to distinguish
the bindings of $x$ and $y$ in $\vn{ctrue}$ and $\vn{cif}$.
\end{examplelist}
\vspace{0.5\baselineskip}
%\begin{wrapfigure}[6]{l}{20em}
%\vspace{-2.5\baselineskip}
\begin{examplelist}
\emph{Depends on}: $\lsec$. \\
%\mbox{} \\
$\kw{let}\>x = \kw{if} (\kw{true}) \{ \kw{box}\>f \} \kw{else} \{ \kw{box}\>g \} \kw{in} \\
\kw{let}\>f = \kw{fun} (y) \{ 1 \} \kw{in} \\
\kw{let}\>g = \kw{fun} (z) \{ \lsec : \kw{true} \} \kw{in} \\
\kw{run}\>(\kw{box}\>((\kw{unbox}\>x)(\hsec : \kw{undef})))$ \\
%\end{wrapfigure}
%\noindent
This example is modelled on the following JavaScript usage pattern~\cite{jensen12}: \\
\texttt{if (...) x = "f"; else} \\
\texttt{x = "g"; eval(x + "()");} \\
$f$ and $g$ are bound to functions; $x$ is set to a code value of either $f$ or $g$;
a function argument is added to the code value and the result executed.
In this example, both $f$ and $g$ ignore their argument $(\hsec : \kw{undef})$,
so the result does not depend on $\hsec$; our analysis correctly identifies this.
\end{examplelist}
\vspace{0.5\baselineskip}
% var x = 1; var geval = eval; function y(){ var x = 2; return (geval("x"));}; y();
%\begin{wrapfigure}[7]{l}{11em}
%\vspace{-2.5\baselineskip}
\begin{examplelist}
\emph{Depends on}: $\hsec, \lsec$.
\emph{Depends (improved)}: $\lsec$. \\
%\mbox{} \\
$\kw{let}\>c = \kw{box}\>x \kw{in} \\
\kw{let}\>x = \lsec : 1 \kw{in} \\
\kw{let}\>\vn{eval} = \kw{fun} (b) \{ \kw{run}\>b \} \kw{in} \\
\kw{let}\>x = \hsec : 2 \kw{in} \\
\vn{eval}(c)$ \\
%\end{wrapfigure}
%\noindent
JavaScript programmers sometimes use $\kw{eval}$ to execute code within a different scope.
\slamjs does not aim to emulate all the quirks of $\kw{eval}$,
but scoping of staged code can still have interesting behaviour,
as shown in this example.
In the scope of the definition of the function bound to $\vn{eval}$,
$x$ is $1$. So when it evaluates the code value $c$, which contains just the
variable $x$, this is the value it returns; note that $x$ was not bound at
all where $c$ was defined.
The second binding of $x$ is unused; our analysis correctly determines this.
\end{examplelist}
\vspace{0.5\baselineskip}
%\begin{wrapfigure}[5]{l}{25em}
%\vspace{-0.5\baselineskip}
\begin{examplelist}
\emph{Depends on}: $\hsec, \isec, \lsec$. \\
%\mbox{} \\
$\kw{let}\>i = \isec : \\
 \{ \proto : \kw{null} , \quot x \quot : (\hsec : 1) , \quot y \quot : (\lsec : 2) \} \kw{in} \\
\kw{let}\>s = \kw{fun} (\vn{id}) \{ 
\kw{let}\>f = \kw{box}\>(i[\kw{unbox}\>\vn{id}]) \kw{in}\>
    \kw{run}\>f
  \} \kw{in} \\
  s(\kw{box}\>{\quot}y{\quot})$ \\
%\end{wrapfigure}
%\noindent
Some JavaScript programmers use $\kw{eval}$ to construct variable names,
as in
(\texttt{var n = 5; eval ({\quot}f\_{\quot} + n);})
to access \texttt{f\_5}.
We cannot express this directly in \slamjs because there are no facilities to manipulate variable names.
Another common practice is to use $\kw{eval}$ to access object properties,
often because of the programmer's ignorance of JavaScript's indirect object field access syntax;
this example models that practice in \slamjs.
Because our analysis does not model the values strings may take,
its handling of field reads and writes is rather coarse,
so it cannot tell the result will not depend on $\hsec$;
this could be addressed refining our abstract value domain.
\end{examplelist}
\vspace{0.5\baselineskip}
%\begin{wrapfigure}[8]{l}{17em}
%\vspace{-2.5\baselineskip}
\begin{examplelist}
\emph{Depends on}: $\hsec$. \\
%\mbox{} \\
$\kw{let}\>\vn{fst} = \kw{fun} (x) \{ \kw{fun} (y) \{x\}\} \kw{in} \\
\kw{let}\>f = \kw{if} (\kw{false}) \{\vn{fst}\} \kw{else} \{\kw{box} \vn{fst}\} \kw{in} \\
\kw{let}\>x = (\hsec:1) \kw{in} \\
\kw{let}\>y = (\lsec:\kw{true}) \kw{in} \\
    \kw{if} (\kw{typeof}\>f  = \mathtt{{\quot}function{\quot}}) 
      \{ (f(x))(y) \} \\
    \kw{else} 
      \{ \kw{run}\>(\kw{box}\>(((\kw{unbox}\>f)(x))(y)) ) \} $ \\
%\end{wrapfigure}
%\noindent
This example models the JavaScript usage pattern: \\
\texttt{if (f instanceof Function) f(x);} \\
\texttt{else eval (f + {\quot}(x){\quot});} \\
which may arise when using $\kw{eval}$ to emulate higher-order functions.
Here, our analysis shows the same precision on a boxed value representing a function
as when dealing with a real function.
\end{examplelist}
\vspace{0.5\baselineskip}
%\begin{wrapfigure}[8]{l}{22em}
%\vspace{-2.5\baselineskip}
\begin{examplelist}
\emph{Depends on:} $\hsec, \lsec$.
\emph{Depends (improved)}: $\lsec$. \\
%\mbox{} \\
$\kw{let}\>\vn{pair} = \kw{fun} (x) \{ \kw{fun} (y) \{ \kw{fun} (z) \{ \kw{run}\>z \} \} \} \kw{in} \\
  \kw{let}\>\vn{fst} = \kw{fun} (z) \{ z(\kw{box}\>x) \} \kw{in} \\
  \kw{let}\>\vn{snd} = \kw{fun} (z) \{ z(\kw{box}\>y) \} \kw{in} \\
  \kw{let}\>\vn{bp} = \kw{box}\>((\vn{pair}(\lsec : (\kw{box}\>(1))))(\hsec : (\kw{box}\>(\kw{true})))) \kw{in} \\
  \kw{let}\>\vn{boxfst} = \kw{box}\>((\vn{fst})(\kw{unbox}\>\vn{bp})) \kw{in} \\
  \kw{run}\>(\kw{run}\>(\vn{boxfst}))$ \\
%\end{wrapfigure}
%\noindent
Most examples of staged metaprogramming in the literature do not use more than
one level of staging.
This example, which pairs and unpairs two values in a
rather roundabout way, illustrates that we can handle higher levels too.
\end{examplelist}
\vspace{0.5\baselineskip}
%\begin{wrapfigure}[5]{l}{23em}
%\vspace{-2.5\baselineskip}
\begin{examplelist}
\emph{Depends on}: $\hsec$. \\
%\mbox{} \\
$\kw{fun}(n) \{(\kw{fun} (x) \{ (x(x))(n) \} ) \\
(\kw{fun} (x) \{ \kw{fun} (y) \{ \kw{if} (y = 0) \{ \kw{true} \} \kw{else} \{(x(x))(y-1) \} \} \}) \\
\} (\hsec : 5)$ \\
%\end{wrapfigure}
%\noindent
This program loops $n$ times (where $n$ is $(\hsec : 5)$ in this instance) before returning $\kw{true}$.
In this sense, the result is independent of $n$:
if $n$ were a high-security input and the output low, the program would satisfy noninterference,
although the duration of execution may leak information about $n$.
However, $n$ must be examined in order to execute the program, so there is an information flow from
$n$ to the result, in the sense captured by our labelled semantics.
That is, no noninterference analysis based on a sound over-approximation of the behaviour
of such a labelled semantics could ever show the program to be noninterfering.
\end{examplelist}
\vspace{0.5\baselineskip}
\begin{examplelist} \label{ex:choi1} \emph{Depends on:} $\lsec$. \\
$
\kw{let}\> \vn{fst} = \kw{fun} (x) \{ \kw{fun} (y) \{ x \} \} \kw{in} \\
\kw{let}\> a = \kw{box}\> x \kw{in} \\
\kw{let}\> b = \kw{box}\> (\kw{fun} (x) \{ \kw{fun}(y) \{ \vn{fst}(\kw{unbox} a)(y) \} \}) \kw{in} \\
(\kw{run} b)(\lsec : 1)(\hsec : 2) \\
$
This program, based on an example from Choi et al.~\cite{choi11},
splices a variable name into a code template to produce code that
takes two arguments and returns the first.
Our analysis correctly determines that the result depends only on the first.
\end{examplelist}
\vspace{0.5\baselineskip}
\begin{examplelist} \label{ex:choi2} \emph{Depends on:} $\lsec, \hsec$. \\
$
\kw{let}\> \vn{fst} = \kw{fun} (x) \{ \kw{fun} (y) \{ x \} \} \kw{in} \\
\kw{let}\> a = \kw{fun} (p) \{p[\texttt{"x"}]\} \kw{in} \\
\kw{let}\> b = (\kw{fun} (h) \{ \kw{fun} (p) \{ \kw{fun} (x) \{ \kw{fun} (y) \{ \\
\> \vn{fst}(h ((p[\texttt{"x"}]=x)[\texttt{"y"}]=y))(y) \}\}\}\})(a) \kw{in} \\
b(\{\texttt{"\_\_proto\_\_"} : \kw{null} \})(\lsec : 1)(\hsec : 2) \\
$
By applying Choi et al.'s unstaging translation to the core of the previous example, we
obtain this unstaged one. Note that while the result of the program is the
same, we lose precision by analysing this version instead of
working directly on the staged version.
\end{examplelist}
\vspace{0.5\baselineskip}
\begin{examplelist} \label{ex:choi3} \emph{Depends on:} $\lsec, \hsec$. \emph{Depends (improved):} $\lsec$.\\
$
\kw{let}\> blank = \kw{fun} (get) \{ get(\kw{null})(\kw{null}) \} \kw{in} \\
\kw{let}\> getx = \kw{fun} (x) \{ \kw{fun} (y) \{ x \} \} \kw{in} \\
\kw{let}\> gety = \kw{fun} (x) \{ \kw{fun} (y) \{ y \} \} \kw{in} \\
\kw{let}\> setx = \kw{fun} (env) \{ \kw{fun} (newx) \{ \\
\> \kw{fun} (get) \{ get(newx)(env(gety)) \} \} \} \kw{in} \\
\kw{let}\> sety = \kw{fun} (env) \{ \kw{fun} (newy) \{ \\
\> \kw{fun} (get) \{ get(env(getx))(newy) \} \} \} \kw{in} \\
\kw{let}\> \vn{fst} = \kw{fun} (x) \{ \kw{fun} (y) \{ x \} \} \kw{in} \\
\kw{let}\> a = \kw{fun} (p) \{p(getx)\} \kw{in} \\
\kw{let}\> b = (\kw{fun} (h) \{ \kw{fun} (p) \{ \kw{fun} (x) \{ \kw{fun} (y) \{ \\
\> \vn{fst}(h (sety(setx(p)(x))(y)))(y) \}\}\}\})(a) \kw{in} \\
b(blank)(\lsec : 1)(\hsec : 2)
$ \\
Here we have applied the unstaging translation, as in the previous example,
but using higher order functions to encode environments instead of records.
In this case, we can recover the lost precision,
but at the cost of an $\mathcal{O}(n^2)$ increase in the size of the source program,
making the combined analysis $\mathcal{O}(n^6)$ instead of $\mathcal{O}(n^3)$~\cite{DBLP:conf/lics/HeintzeM97}.
\end{examplelist}

\section{Related Work}
\label{sec:related}
% !TEX root = slam-js-ni.tex
\subsection{From \slamjs to JavaScript Applications}\label{sec:applications}

The application that guided our work is
information flow analysis for JavaScript in Web applications.
We now consider some of the features of this scenario that we have not addressed
and how they have been handled by others.
We claim that most of the problems have been addressed,
although combining them into a single analysis system would require further effort.

% string analysis, abstract interpretation (for bools, etc)
\emph{Handling of Primitive Datatypes}
As demonstrated in some of our examples,
our analysis models its primitive datatypes (such as strings and booleans) very coarsely;
our abstract domains are too simple.
Fortunately, more refined abstractions for these datatypes have been well-studied~\cite{DBLP:conf/aplas/ChoiLKD06}.

% full js - mutable state, exceptions, etc
\emph{Imperative Control Flow and Exceptions}
JavaScript has several features not found in \slamjs,
including typical imperative control flow features (such as $\kw{for}$ loops)
and exceptions, but there are CFA-style analyses for JavaScript.
Perhaps most notable is the recent CFA2 analysis~\cite{vardoulakis11},
which was developed for JavaScript
and features significantly better analysis of higher order flow control.

% js semantics complex
\emph{JavaScript Semantics}
A bigger problem in producing a sound analysis of JavaScript
is the complexity and quaintness of its semantics~\cite{DBLP:conf/aplas/MaffeisMT08}.
Guha et al. attempt to simplify this problem by
producing a much simpler ``core calculus'' for JavaScript called \lambdajs\ and a transformation from JavaScript into \lambdajs~\cite{guha10}.
They have mechanised various proofs about their language in Coq.
As Web applications execute in the context of a webpage in a browser,
an analysis must also model how a webpage interacts with code via the DOM.

% work in ideas from remedying eval...
\emph{Code Strings vs Staged Code}
Perhaps the most relevant difference between JavaScript and \slamjs
is our metaprogramming constructs:
JavaScript $\kw{eval}$ runs on strings,
while, in an effort to develop a more principled analysis,
our staged metaprogramming follows the tradition of Lisp quotations.
To analyse uses of $\kw{eval}$ with our techniques,
we would need a sound transformation into staged metaprogramming.
Jensen et al. use the result of a string analysis
produced by the tool TAJS to replace certain uses of $\kw{eval}$
with unstaged code where it is safe to do so~\cite{jensen12};
the transformed program is then fed back into the analysis tool.
We propose to handle a wider range of use cases
with the more general approach of
transforming $\kw{eval}$ on strings into staged code
and then analysing the staged code.

% reactive ni - input, output
\emph{Reactive Systems}
A practical Web application is not simply a program that take inputs,
runs once, then gives output:
it may interleave input and output throughout its execution, which might not terminate.
Bohannon et al. consider the consequences of this
for information security in their work on reactive noninterference~\cite{DBLP:conf/ccs/BohannonPSWZ09}.

% practical issues - browser or server-side, who sets policies?
% dynamic loading
% ni alone too strong - need declassification
\emph{Infrastructural Issues}
In applying an information flow analysis to a Web application,
several infrastructural issues need to be addressed.
Would the code be analysed before being published by on a webserver,
in the browser running it or by some proxy in between?
Will the entire code be available in advance, or must it be analysed in fragments~\cite{chugh09}?
Who would set the security policies that the analysis should enforce?
Li and Zdancewic argue that noninterference alone is too strict a policy to enforce
and that a practical policy must allow for limited declassification~\cite{DBLP:conf/popl/LiZ05}.

% !TEX root = slam-js-ni.tex
\subsection{Information Flow Analysis}\label{sec:related-if}

Early work on information flow security focused on monitoring
program execution, dynamically marking variables to indicate their
level of confidentiality~\cite{DBLP:journals/cj/Fenton74}.
However, the study of static analysis for information flow security
can essentially be traced back to Denning, who introduced a lattice
model for secure information flow and critically considered both
direct and indirect flows~\cite{DBLP:journals/cacm/Denning76}.
Denning and Denning developed a simple static information flow
analysis that rejected programs with flows violating a security
policy~\cite{DBLP:journals/cacm/DenningD77}.

\emph{Noninterference}
Goguen and Meseguer introduced the idea of
noninterference~\cite{goguen82} (the inability of the actions of one
party, or equivalently data at one level, to influence those of
another) as a way of specifying security policies, including
enforcement of information flow security.
Noninterference and information flow security became almost synonymous,
although Pottier and Conchon were careful to emphasise the distinction
between the two~\cite{DBLP:conf/icfp/PottierC00}.

\emph{Security Type Systems}
Security type systems became a common way of enforcing noninterference
policies and proving the correctness of noninterference analyses,
progressing from a reformulation of Denning and Denning's
analysis~\cite{DBLP:journals/jcs/VolpanoIS96} to Simonet and Pottier's
type system for ML~\cite{DBLP:journals/toplas/PottierS03}.
Unfortunately, the requirement that the program analysed follow
a strict type discipline makes it impractical to apply these ideas to
dynamically typed languages such as JavaScript.
Perhaps as a consequence, information flow in untyped and dynamically typed
languages is relatively poorly understood.

\emph{Dynamic Analyses}
Dynamic information flow analysis circumvents the need for a type
system or other static analysis by tracking information flow during
program execution, and enforcing security policies by aborting program
execution if an undesired flow is detected; examples of such
analyses for JavaScript are presented by Just et al.~\cite{just11}
and Hedin and Sabelfeld~\cite{DBLP:conf/csfw/HedinS12}.
Indeed, the problems they address and their motivations are very
similar to ours, but our methods are very different.

\emph{Dynamic vs Static}
A dynamic analysis only observes one program run at a time, so dynamic
code generation is easy to handle. However, care has to be taken to
track indirect information flow due to code that was \emph{not}
executed in the observed run. Strategies to achieve this include, for
instance, the \emph{no-sensitive upgrade}
check~\cite{zdancewic-thesis}, which aborts execution if a public
variable is assigned in code that is control dependent on private
data. As a rule, however, such strategies are fairly coarse and could
potentially abort many innocuous executions; thus it is commonly held
that static analyses are superior to dynamic ones in their treatment
of indirect flows~\cite{DBLP:journals/jsac/SabelfeldM03},
although there has been a resurgence of interest
in dynamic analyses~\cite{DBLP:conf/ershov/SabelfeldR09}.

\emph{Hybrid Approaches}
As a compromise, Chugh et al.~\cite{chugh09} propose extending a
static information flow analysis with a dynamic component that
performs additional checks at runtime when dynamically generated code
becomes available. The static part of their analysis is similar to ours
(minus staging), although they do not formally state or prove its soundness.
Their study of JavaScript on popular websites suggests
the static part is precise enough to be useful.
Because the additional checks on dynamically generated code occur at
runtime, they must necessarily be quick and simple to avoid performance degradation.
Consequently, these checks are limited to purely syntactic isolation properties,
with a corresponding loss of precision.
Our fully static analysis does not suffer from these limitations.

Going in the other direction, Austin and
Flanagan~\cite{austin12} have proposed \emph{faceted execution}, a
form of dynamic analysis that explores different execution paths and
can thus recover some of the advantages of a static analysis.

\subsection{Static Analysis of Staged Metaprogramming}
\label{sec:related-meta}

Many different approaches to staged metaprogramming have been
proposed. Our language's staging constructs are modelled after
the language $\lambda_{\mathcal{S}}$ of Choi et al.~\cite{choi11}.
However, our semantics of variable capture are different.
For example, we allow the program
$(\kw{fun}(x) \{ \kw{run}\>(\kw{box}\>x) \} (1))$,
which behaves much like this JavaScript program:
$\texttt{(function (x) \{return eval("x")\})(1);}$

Control flow analysis for a two-staged language has been investigated
by Kim et al.~\cite{kim09}. Their approach is based on abstract interpretation,
putting particular emphasis on inferring an over-approximation
of all possible pieces of code to which a code quotation may evaluate.
This information is not explicitly computed by our
analysis, so it is quite possible that their analysis is more precise
than ours. However it does not seem to have been implemented yet.

Choi et al.~\cite{choi11} propose a more general framework for static
analysis of multi-staged programs, which is based on an unstaging
translation that replaces staging constructs with function
abstractions and applications. Under certain conditions, analysis
results for the unstaged program can then be translated back to its
staged version.

There are some limitations to their work.
Most significantly, many interesting programs, such as the one mentioned earlier,
are not valid in $\lambda_{\mathcal{S}}$ and hence
cannot be unstaged using their translation;
this limits its applicability to JavaScript.
Furthermore, as shown in Examples~\ref{ex:choi1}--\ref{ex:choi3},
the precision of the resulting combined analysis is highly sensitive to the
target language encoding used in the translation and the behaviour of the target language analysis.
While their approach is useful as a quick way of adding staging to an
existing language and analysis, we argue that staging constructs are
sufficiently important and complex that
we should aim to analyse them directly.

Inoue and Taha~\cite{DBLP:conf/esop/InoueT12} consider the problem of
reasoning about staged programs; in particular, they identify
equivalences that fail to hold in the presence of staging, and develop
a notion of bisimulation that can be used to prove extensionality of
function abstractions, and work around some of the failing
equivalences. Their language differs from ours in that it avoids name
capture.

\section{Conclusions}\label{sec:conclusions}

We have presented a fully static information flow analysis based on 0CFA
for a dynamically typed language with staged metaprogramming,
implemented it and formally proved its soundness.
We believe our approach is transferrable to other CFA-style analyses
and applicable to JavaScript.

Progressing from here, there are three obvious lines of work.
The first is to improve the precision of the analysis by applying its ideas
to CFA2 or using results from abstract interpretation.
The second is to extend the language to handle more features,
such as imperative control flow and exceptions.
The third and most important is to apply string analysis techniques to
produce a sound transformation from a language with $\kw{eval}$ on code
strings to a language with staged code values.

All the pieces are now in place for an
interesting, sound and principled analysis of JavaScript with $\kw{eval}$,
but it will take significant effort to bring them together.

%\subsubsection*{Acknowledgments.} The heading should be treated as a
%subsubsection heading and should not be assigned a number.

%\section*{Acknowledgment}
%The authors would like to thank...

% trigger a \newpage just before the given reference
% number - used to balance the columns on the last page
% adjust value as needed - may need to be readjusted if
% the document is modified later
%\IEEEtriggeratref{8}
% The "triggered" command can be changed if desired:
%\IEEEtriggercmd{\enlargethispage{-5in}}

% references section

% can use a bibliography generated by BibTeX as a .bbl file
% BibTeX documentation can be easily obtained at:
% http://www.ctan.org/tex-archive/biblio/bibtex/contrib/doc/
% The IEEEtran BibTeX style support page is at:
% http://www.michaelshell.org/tex/ieeetran/bibtex/
\bibliographystyle{IEEEtran}
% argument is your BibTeX string definitions and bibliography database(s)
\bibliography{IEEEabrv,refs}

% Generated by IEEEtran.bst, version: 1.13 (2008/09/30)
\begin{thebibliography}{10}
\providecommand{\url}[1]{#1}
\csname url@samestyle\endcsname
\providecommand{\newblock}{\relax}
\providecommand{\bibinfo}[2]{#2}
\providecommand{\BIBentrySTDinterwordspacing}{\spaceskip=0pt\relax}
\providecommand{\BIBentryALTinterwordstretchfactor}{4}
\providecommand{\BIBentryALTinterwordspacing}{\spaceskip=\fontdimen2\font plus
\BIBentryALTinterwordstretchfactor\fontdimen3\font minus
  \fontdimen4\font\relax}
\providecommand{\BIBforeignlanguage}[2]{{%
\expandafter\ifx\csname l@#1\endcsname\relax
\typeout{** WARNING: IEEEtran.bst: No hyphenation pattern has been}%
\typeout{** loaded for the language `#1'. Using the pattern for}%
\typeout{** the default language instead.}%
\else
\language=\csname l@#1\endcsname
\fi
#2}}
\providecommand{\BIBdecl}{\relax}
\BIBdecl

\bibitem{goguen82}
J.~A. Goguen and J.~Meseguer, ``{Security Policies and Security Models},'' in
  \emph{IEEE Symposium on Security and Privacy}, 1982, pp. 11--20.

\bibitem{DBLP:conf/popl/KimYC06}
I.-S. Kim, K.~Yi, and C.~Calcagno, ``A polymorphic modal type system for
  lisp-like multi-staged languages,'' in \emph{POPL}, J.~G. Morrisett and
  S.~L.~P. Jones, Eds.\hskip 1em plus 0.5em minus 0.4em\relax ACM, 2006, pp.
  257--268.

\bibitem{guha10}
A.~Guha, C.~Saftoiu, and S.~Krishnamurthi, ``{The Essence of JavaScript},'' in
  \emph{ECOOP}, 2010, pp. 126--150.

\bibitem{shivers88}
O.~Shivers, ``{Control-Flow Analysis in Scheme},'' in \emph{PLDI}, 1988, pp.
  164--174.

\bibitem{vardoulakis11}
D.~Vardoulakis and O.~Shivers, ``{CFA2: a Context-Free Approach to Control-Flow
  Analysis},'' \emph{Logical Methods in Computer Science}, vol.~7, no.~2, 2011.

\bibitem{DBLP:journals/toplas/PottierS03}
F.~Pottier and V.~Simonet, ``{Information Flow Inference for ML},''
  \emph{TOPLAS}, vol.~25, no.~1, pp. 117--158, 2003.

\bibitem{DBLP:conf/aplas/MaffeisMT08}
S.~Maffeis, J.~C. Mitchell, and A.~Taly, ``{An Operational Semantics for
  JavaScript},'' in \emph{APLAS}, 2008, pp. 307--325.

\bibitem{richards11}
G.~Richards, C.~Hammer, B.~Burg, and J.~Vitek, ``{The Eval That Men Do --- A
  Large-Scale Study of the Use of Eval in JavaScript Applications},'' in
  \emph{ECOOP}, 2011.

\bibitem{choi11}
W.~Choi, B.~Aktemur, K.~Yi, and M.~Tatsuta, ``{Static Analysis of Multi-staged
  Programs via Unstaging Translation},'' in \emph{POPL}, 2011, pp. 81--92.

\bibitem{jensen12}
S.~H. Jensen, P.~A. Jonsson, and A.~M{\o}ller, ``{Remedying the Eval that Men
  Do},'' in \emph{ISSTA}, 2012, pp. 34--44.

\bibitem{DBLP:conf/icfp/PottierC00}
F.~Pottier and S.~Conchon, ``{Information Flow Inference for Free},'' in
  \emph{ICFP}, 2000.

\bibitem{DBLP:conf/popl/AbadiBHR99}
M.~Abadi, A.~Banerjee, N.~Heintze, and J.~G. Riecke, ``{A Core Calculus of
  Dependency},'' in \emph{POPL}, 1999, pp. 147--160.

\bibitem{nnh}
F.~Nielson, H.~R. Nielson, and C.~Hankin, \emph{Principles of Program
  Analysis}.\hskip 1em plus 0.5em minus 0.4em\relax Springer, 1999.

\bibitem{DBLP:conf/lics/HeintzeM97}
N.~Heintze and D.~A. McAllester, ``On the cubic bottleneck in subtyping and
  flow analysis,'' in \emph{LICS}.\hskip 1em plus 0.5em minus 0.4em\relax IEEE
  Computer Society, 1997, pp. 342--351.

\bibitem{DBLP:conf/aplas/ChoiLKD06}
T.-H. Choi, O.~Lee, H.~Kim, and K.-G. Doh, ``{A Practical String Analyzer by
  the Widening Approach},'' in \emph{APLAS}, 2006, pp. 374--388.

\bibitem{DBLP:conf/ccs/BohannonPSWZ09}
A.~Bohannon, B.~C. Pierce, V.~Sj{\"o}berg, S.~Weirich, and S.~Zdancewic,
  ``{Reactive Noninterference},'' in \emph{Computer and Communications
  Security}, 2009, pp. 79--90.

\bibitem{chugh09}
R.~Chugh, J.~A. Meister, R.~Jhala, and S.~Lerner, ``{Staged Information Flow
  for JavaScript},'' in \emph{PLDI}, 2009, pp. 50--62.

\bibitem{DBLP:conf/popl/LiZ05}
P.~Li and S.~Zdancewic, ``Downgrading policies and relaxed noninterference,''
  in \emph{POPL}, J.~Palsberg and M.~Abadi, Eds.\hskip 1em plus 0.5em minus
  0.4em\relax ACM, 2005, pp. 158--170.

\bibitem{DBLP:journals/cj/Fenton74}
J.~S. Fenton, ``Memoryless subsystems,'' \emph{Comput. J.}, vol.~17, no.~2, pp.
  143--147, 1974.

\bibitem{DBLP:journals/cacm/Denning76}
D.~E. Denning, ``{A Lattice Model of Secure Information Flow},'' \emph{CACM},
  vol.~19, no.~5, pp. 236--243, 1976.

\bibitem{DBLP:journals/cacm/DenningD77}
D.~E. Denning and P.~J. Denning, ``{Certification of Programs for Secure
  Information Flow},'' \emph{CACM}, vol.~20, no.~7, pp. 504--513, 1977.

\bibitem{DBLP:journals/jcs/VolpanoIS96}
D.~M. Volpano, C.~E. Irvine, and G.~Smith, ``A sound type system for secure
  flow analysis,'' \emph{Journal of Computer Security}, vol.~4, no. 2/3, pp.
  167--188, 1996.

\bibitem{just11}
S.~Just, A.~Cleary, B.~Shirley, and C.~Hammer, ``{Information Flow Analysis for
  JavaScript},'' in \emph{PLASTIC}, 2011.

\bibitem{DBLP:conf/csfw/HedinS12}
D.~Hedin and A.~Sabelfeld, ``Information-flow security for a core of
  javascript,'' in \emph{CSF}, S.~Chong, Ed.\hskip 1em plus 0.5em minus
  0.4em\relax IEEE, 2012, pp. 3--18.

\bibitem{zdancewic-thesis}
S.~Zdancewic, ``{Programming Languages for Information Security},'' Ph.D.
  dissertation, Cornell University, 2002.

\bibitem{DBLP:journals/jsac/SabelfeldM03}
A.~Sabelfeld and A.~C. Myers, ``Language-based information-flow security,''
  \emph{IEEE Journal on Selected Areas in Communications}, vol.~21, no.~1, pp.
  5--19, 2003.

\bibitem{DBLP:conf/ershov/SabelfeldR09}
A.~Sabelfeld and A.~Russo, ``{From Dynamic to Static and Back: Riding the
  Roller Coaster of Information-Flow Control Research},'' in \emph{Ershov
  Memorial Conf.}, 2009.

\bibitem{austin12}
T.~H. Austin and C.~Flanagan, ``{Multiple Facets for Dynamic Information
  Flow},'' in \emph{POPL}, 2012, pp. 165--178.

\bibitem{kim09}
T.~Kim, C.~Lee, K.~Lee, S.~Baik, and K.~Yi, ``{A Control Flow Analysis for
  2-staged Programming Languages},'' ROSAEC, Techreport ROSAEC-2009-005, 2009.

\bibitem{DBLP:conf/esop/InoueT12}
J.~Inoue and W.~Taha, ``{Reasoning About Multi-stage Programs},'' in
  \emph{ESOP}, 2012.

\end{thebibliography}
%
% <OR> manually copy in the resultant .bbl file
% set second argument of \begin to the number of references
% (used to reserve space for the reference number labels box)

\clearpage

%\appendix

\appendix[\slamjs Semantics Definitions]
\label{a:sem}

Some of the less interesting definitions of \slamjs semantics are given here
in full.

\begin{figure}[h!]
\[
\begin{array}{rclcl}
C^m_n & ::=   & [\,] & \in & \Ctxt^n_n \\
      & \mid & (\{\overline{s:v^m}, s:C^m_n, \overline{s:e}\}) & \in & \Ctxt^m_n \\
      & \mid & (\kw{fun}(x)\{C^{m+1}_n\}) & \in & \Ctxt^{m+1}_n \\
      & \mid & (C^m_n(e)) & \in & \Ctxt^m_n \\
      & \mid & (v^m(C^m_n)) & \in & \Ctxt^m_n \\
      & \mid & (\kw{box}\>C^{m+1}_n) & \in & \Ctxt^m_n \\
      & \mid & (\kw{unbox}\>C^m_n) & \in & \Ctxt^{m+1}_n \\
      & \mid & (\kw{run}\>C^m_n) & \in & \Ctxt^m_n \\
      & \mid & (\kw{if}(C^m_n)\{e\}\kw{else}\{e\}) & \in & \Ctxt^m_n \\
      & \mid & (\kw{if}(v^{m+1})\{C^{m+1}_n\}\kw{else}\{e\}) & \in & \Ctxt^{m+1}_n \\
      & \mid & (\kw{if}(v^{m+1})\{v^{m+1}\}\kw{else}\{C^{m+1}_n\}) & \in & \Ctxt^{m+1}_n \\
      & \mid & (C^m_n[e]) & \in & \Ctxt^m_n \\
      & \mid & (v^m[C^m_n]) & \in & \Ctxt^m_n \\
      & \mid & (C^m_n[e]=e) & \in & \Ctxt^m_n \\
      & \mid & (v^m[C^m_n]=e) & \in & \Ctxt^m_n \\
      & \mid & (v^m[v^m]=C^m_n) & \in & \Ctxt^m_n \\
      & \mid & (\kw{del}\>C^m_n[e]) & \in & \Ctxt^m_n \\
      & \mid & (\kw{del}\>v^m[C^m_n]) & \in & \Ctxt^m_n \\
      & \mid & (\kw{run}\>C^m_n\>\kw{in}\>\rho) & \in & \Ctxt^m_n
\end{array}
\]
\caption{Evaluation contexts}\label{fig:contexts}
\end{figure}

\begin{figure}[h!]
\[
\begin{array}{rcl}
(k,\rho) & \red{n} & k \\
(\{\overline{s:e}\},\rho) & \red{n} & \{\overline{s:(e,\rho)}\} \\
(x,\rho) & \red{n+1} & x \\
(\kw{fun}(x)\{e\},\rho) & \red{n+1} & (\kw{fun}(x)\{(e,\rho)\}) \\
(e_1 (e_2),\rho) & \red{n} & ((e_1,\rho)((e_2,\rho))) \\
(\kw{box}\>e,\rho) & \red{n} & (\kw{box}\>(e,\rho)) \\
(\kw{unbox}\>e,\rho) & \red{n} & (\kw{unbox}\>(e,\rho)) \\
(\kw{run}\>e,\rho) & \red{0} & (\kw{run}\>(e,\rho)\>\kw{in}\>\rho) \\
(\kw{run}\>e,\rho) & \red{n+1} & (\kw{run}\>(e,\rho)) \\
(\kw{if}(e_1)\{e_2\}\kw{else}\{e_3\},\rho) & \red{n} & (\kw{if}((e_1,\rho))\{(e_2,\rho)\}\kw{else}\{(e_3,\rho)\}) \\
(e_1[e_2],\rho) & \red{n} & ((e_1,\rho)[(e_2,\rho)]) \\
(e_1[e_2]=e_3,\rho) & \red{n} & ((e_1,\rho)[(e_2,\rho)]=(e_3,\rho)) \\
(\kw{del}\>e_1[e_2],\rho) & \red{n} & (\kw{del}\>(e_1,\rho)[(e_2,\rho)]) \\
\end{array}
\]
\caption{Environment propagation rules}\label{fig:environment propagation rules}
\end{figure}

\begin{figure}[h!]
\[
\begin{array}{rcll}
\erase{\blank}{M} & = & \blank \\

\erase{k}{M} & = & k \\
\erase{\{\overline{s:e}\}}{M} & = & \{\overline{s:\erase{e}{M}}\} \\
\erase{x}{M} & = & x \\
\erase{\kw{fun}(x) \{ e \}}{M} & = & \kw{fun}(x) \{ \erase{e}{M} \} \\
\erase{e_1(e_2)}{M} & = & \erase{e_1}{M}(\erase{e_2}{M}) \\
\erase{\kw{box}\>e}{M} & = & \kw{box}\>\erase{e}{M} \\
\erase{\kw{unbox}\>e}{M} & = & \kw{unbox}\>\erase{e}{M} \\
\erase{\kw{run}\>e}{M} & = & \kw{run}\>\erase{e}{M} \\
\erase{ \kw{if} (e_1) \{ e_2 \} \kw{else} \{ e_3 \} }{M} & = &
	\kw{if} (\erase{e_1}{M}) \{ \erase{e_2}{M} \} \kw{else} \{ \erase{e_3}{M} \} \\
\erase{e_1[e_2]}{M} & = & \erase{e_1}{M}[\erase{e_2}{M}] \\
\erase{e_1[e_2]=e_3}{M} & = & \erase{e_1}{M}[\erase{e_2}{M}] = \erase{e_3}{M} \\
\erase{\kw{del}\>e_1[e_2]}{M} & = & \kw{del}\>\erase{e_1}{M}[\erase{e_2}{M}] \\

\erase{(e, \rho)}{M} & = & (\erase{e}{M}, \erase{\rho}{M} ) \\
\erase{\kw{run}\>e \kw{in}\>\rho}{M} & = & \kw{run}\>\erase{e}{M} \kw{in}\>\erase{\rho}{M} \\
\erase{\rho}{M} (x) & = & \erase{\rho(x)}{M} \\

\erase{\mm : e}{M} & = & \mm  : \erase{e}{M} & \hspace{-6em} \text{if $\mm  \in M$} \\
\erase{\mm : e}{M} & = & \blank & \hspace{-6em} \text{if $\mm  \notin M$} \\
\end{array}
\]
\caption{Definition of $\erase{e}{M}$, the $M$-erasure of $e$}
\end{figure}

\clearpage
% !TEX root = slam-js-ni.tex
\appendix[0CFA for \slamjs]\label{app:cfa}

\subsection{Labelled Semantics}

We extend the syntax of \slamjs with labels to indicate program points.
The labels have no effect on the result of computation,
but are used to track which values may occur at which points.
Consequently, it is important for the soundness of the corresponding analysis
that the semantics correctly tracks labels.

We reformulate the syntax of \slamjs to distinguish between terms
(expressions in the unlabelled semantics) and expressions (which are labelled terms):

\[
\begin{array}{llcl}
\text{Expressions} & e & ::= & t^\ell \\
\text{Terms} & t & ::= & k \mid \{\overline{s:e}\} \mid x \mid \kw{fun}(x)\{e\} \mid e(e) \\
& & \mid & \kw{box}\>e \mid \kw{unbox}\>e  \mid \kw{run}\>e \\
             &   & \mid & \kw{if}(e)\{e\}\kw{else}\{e\} \mid e[e] \mid e[e]=e \\
& & \mid & \kw{del}\>e[e] \mid (t,\rho) \mid \kw{run}\>e\>\kw{in}\>\rho \\
\end{array}
\]

Values remain expressions, so they include labels at the outer level.
For example, $k^\ell$ is a value, rather than $k$.
Contexts other than the empty context also gain labels at the outer level,
so we have $(C^m_n \ang{e})^\ell$ rather than $(C^m_n \ang{e})$.

The labelling of the reduction rules is a little more complicated,
so we list them in full in
Figure~\ref{fig:labelled_environment_propagation_rules},~\ref{fig:labelled_proper_reduction_rules}~and~\ref{fig:labelled_lifts}.
For an expression $e = t^\ell$, we write $e^{\ell'}$ as a shorthand for $t^{\ell'}$
and $(e,\rho)^{\ell'}$ for $(t,\rho)^{\ell'}$.
Note that we use this in the rules \rlname{Lookup}, \rlname{Unbox}, \rlname{Run} and \rlname{Read1}.

\begin{figure*}
\[
\begin{array}{rcl}
(k,\rho)^\ell & \red{n} & k^\ell \\
(\{\overline{s:t^\ell}\},\rho)^{\ell'} & \red{n} & \{\overline{s:(t,\rho)^\ell}\}^{\ell'} \\
(x,\rho)^\ell & \red{n+1} & x^\ell \\
(\kw{fun}(x)\{t^{\ell}\},\rho)^{\ell'} & \red{n+1} & (\kw{fun}(x)\{(t,\rho)^{\ell}\})^{\ell'} \\
(t_1^{\ell_1}(t_2^{\ell_2}),\rho)^\ell & \red{n} & ((t_1,\rho)^{\ell_1}((t_2,\rho)^{\ell_2}))^\ell \\
(\kw{box}\>t^\ell,\rho)^{\ell'} & \red{n} & (\kw{box}\>(t,\rho)^\ell)^{\ell'} \\
(\kw{unbox}\>t^\ell,\rho)^{\ell'} & \red{n} & (\kw{unbox}\>(t,\rho)^\ell)^{\ell'} \\
(\kw{run}\>t^\ell,\rho)^{\ell'} & \red{0} & (\kw{run}\>(t,\rho)^\ell\>\kw{in}\>\rho)^{\ell'} \\
(\kw{run}\>t^\ell,\rho)^{\ell'} & \red{n+1} & (\kw{run}\>(t,\rho)^\ell)^{\ell'} \\
(\kw{if}(t_1^{\ell_1})\{t_2^{\ell_2}\}\kw{else}\{t_3^{\ell_3}\},\rho)^{\ell_0} & \red{n} & (\kw{if}((t_1,\rho)^{\ell_1})\{(t_2,\rho)^{\ell_2}\}\kw{else}\{(t_3,\rho)^{\ell_3}\})^{\ell_0} \\
(t_1^{\ell_1}[t_2^{\ell_2}],\rho)^{\ell_0} & \red{n} & ((t_1,\rho)^{\ell_1}[(t_2,\rho)^{\ell_2}])^{\ell_0} \\
(t_1^{\ell_1}[t_2^{\ell_2}]=t_3^{\ell_3},\rho)^{\ell_0} & \red{n} & ((t_1,\rho)^{\ell_1}[(t_2,\rho)^{\ell_2}]=(t_3,\rho)^{\ell_3})^{\ell_0} \\
(\kw{del}\>t_1^{\ell_1}[t_2^{\ell_2}],\rho)^{\ell_0} & \red{n} & (\kw{del}\>(t_1,\rho)^{\ell_1}[(t_2,\rho)^{\ell_2}])^{\ell_0} \\
(\mm : t_1^{\ell_1},\rho)^\ell & \red{n} & (\mm : (t_1,\rho)^{\ell_1})^\ell
\end{array}
\]
\caption{Labelled environment propagation rules}\label{fig:labelled_environment_propagation_rules}
\end{figure*}

\begin{figure*}
\[
\begin{array}{lrcll}
\rlname{Lookup} & (x,\rho)^\ell & \red{0} & v^\ell & \text{where $\rho(x)=v$} \\
\rlname{Apply} & ((\kw{fun}(x)\{t^{\ell_1}\},\rho)^{\ell_2}(v))^{\ell_3} & \red{0} & (t,\rho[x\mapsto v])^{\ell_1} \\
\rlname{Unbox} & (\kw{unbox}\>(\kw{box}\>v^1)^{\ell_1})^{\ell_2} & \red{1} & (v^1)^{\ell_2} \\
\rlname{Run} & (\kw{run}\>(\kw{box}\>v^1)^{\ell_1}\>\kw{in}\>\rho)^{\ell_2} & \red{0} & (v^1, \rho)^{\ell_2} \\
\rlname{IfTrue} & (\kw{if}(\kw{true})\{t_1^{\ell_1}\}\kw{else}\{t_2^{\ell_2}\})^{\ell} & \red{0} & t_1^{\ell_1} \\
\rlname{IfFalse} & (\kw{if}(\kw{false})\{t_1^{\ell_1}\}\kw{else}\{t_2^{\ell_2}\})^{\ell} & \red{0} & t_2^{\ell_2} \\
\rlname{Read1} & (\{\overline{s:v},s_i:v_i,\overline{s:v}'\}^{\ell_1}[s_i^{\ell_2}])^{\ell_3} & \red{0} & v_i^{\ell_3} \\
\rlname{Read2} & (\{\overline{s:v},\mathtt{\quot\_\_proto\_\_\quot}:\{\overline{s:v}'\}^{\ell_1'},\overline{s:v}''\}^{\ell_1}[s_x^{\ell_2}])^{\ell_3} & \red{0} & (\{\overline{s:v}'\}^{\ell_1'}[s_x^{\ell_2}])^{\ell_3} & \text{if $s_x\not\in\overline{s}\cup\overline{s}''$} \\
\rlname{Read3} &  (\{\overline{s:v},\mathtt{\quot\_\_proto\_\_\quot}:\kw{null}^{\ell_1'},\overline{s:v}''\}^{\ell_1}[s_x^{\ell_2}])^{\ell_3} & \red{0} & \kw{undef}^{\ell_3} & \text{if $s_x\not\in\overline{s}\cup\overline{s}''$} \\
\rlname{Write1} & (\{\overline{s:v},s_i:v_i,\overline{s:v}'\}^{\ell_1}[s_i^{\ell_2}]=v_i')^{\ell_3} & \red{0} & \{\overline{s:v},s_i:v_i',\overline{s:v}'\}^{\ell_3} \\
\rlname{Write2} & (\{\overline{s:v}\}^{\ell_1}[s_x^{\ell_2}]=v_x)^{\ell_3} & \red{0} & \{\overline{s:v},s_x:v_x\}^{\ell_3} & \text{if $s_x\not\in\overline{s}$}\\
\rlname{Del1} & (\kw{del}\>\{\overline{s:v},s_i:v_i,\overline{s:v}'\}^{\ell_1}[s_i^{\ell_2}])^{\ell_3} & \red{0} & \{\overline{s:v},\overline{s:v}'\}^{\ell_3} \\
\rlname{Del2} & (\kw{del}\>\{\overline{s:v}\}^{\ell_1}[s_x^{\ell_2}])^{\ell_3} & \red{0} & \{\overline{s:v}\}^{\ell_3} & \text{if $s_x\not\in\overline{s}$}\\
\end{array}
\]
\caption{Labelled proper reduction rules}\label{fig:labelled_proper_reduction_rules}
\end{figure*}

\begin{figure*}
\[
\begin{array}{lrcll}
\rlname{Lift-App} & (((\mm : t^{\ell_1}), \rho)^{\ell_2}(v))^{\ell_3} & \red{0} &
	(\mm : (( t, \rho)^{\ell_1}(v))^{\ell_3})^{\ell_3} \\
\rlname{Lift-If} & (\kw{if} ((\mm : v)^{\ell_0}) \{ t_1^{\ell_1} \} \kw{else} \{ t_2^{\ell_2} \} )^\ell & \red{0} &
	(\mm : (\kw{if} (v) \{ t_1^{\ell_1} \} \kw{else} \{ t_2^{\ell_2} \} )^{\ell})^{\ell} \\
\rlname{Lift-Unbox} & (\kw{unbox}\>(\mm : v)^{\ell_1})^{\ell_2} & \red{1} & (\mm : (\kw{unbox}\>v)^{\ell_2})^{\ell_2} \\
\rlname{Lift-RunIn} & (\kw{run}\>(\mm : v)^{\ell_1} \kw{in}\>\rho)^{\ell_2} & \red{0} &
	(\mm : (\kw{run}\>v \kw{in}\>\rho)^{\ell_2})^{\ell_2} \\
% need to express precendence of these rules with side conditions
\rlname{Lift-ReadSel} & (v_1 [(\mm : v_2)^{\ell_1}])^{\ell_2} & \red{0} & (\mm : (v_1[v_2])^{\ell_2})^{\ell_2} \\
\rlname{Lift-ReadRec} & ((\mm : v_1)^{\ell_1} [v_2])^{\ell_2} & \red{0} & (\mm : (v_1[v_2])^{\ell_2})^{\ell_2} \\
\rlname{Lift-WriteSel} & (v_1 [(\mm : v_2)^{\ell_1}] = v_3)^{\ell_2} & \red{0} & (\mm : (v_1[v_2] = v_3)^{\ell_2})^{\ell_2} \\
\rlname{Lift-WriteRec} & ((\mm : v_1)^{\ell_1} [v_2] = v_3)^{\ell_2} & \red{0} & (\mm : (v_1[v_2] = v_3)^{\ell_2})^{\ell_2} \\
\rlname{Lift-DelSel} & (\kw{del}\>v_1[(\mm : v_2)^{\ell_1}])^{\ell_2} & \red{0} & (\mm : (\kw{del}\>v_1[v_2])^{\ell_2})^{\ell_2} \\
\rlname{Lift-DelRec} & (\kw{del}\>(\mm : v_1)^{\ell_1}[v_2])^{\ell_2} & \red{0} & (\mm : (\kw{del}\>v_1[v_2])^{\ell_2})^{\ell_2}
\end{array}
\]
\caption{Labelled lifts}\label{fig:labelled_lifts}
\end{figure*}

%\clearpage

\subsection{Analysis}

\begin{figure*}
\[
\begin{array}{lrcl}
\text{Abstract values} & \nu \in \AbsVal & ::= & \mathtt{NULL} \mid \mathtt{UNDEF} \mid \mathtt{BOOL} \mid \mathtt{NUM} \mid \mathtt{STR} \\
                       &                 & \mid & \mathtt{FUN}(x,e) \mid \mathtt{BOX}(e) \mid \mathtt{REC}(\ell)\\
\text{Abstract variables} & \xi \in \AbsVar & ::= & x \mid \ell.p \\
\text{Abstract caches} & \Gamma & \colon & \Label\to\pow(\AbsVal) \\
\text{Abstract environments} & \varrho & \colon & \AbsVar\to\pow(\AbsVal) \\
\end{array}
\]
\caption{Abstract domains}\label{fig:abstract-domains}
\end{figure*}

The abstract domains of the analysis are defined in Figure~\ref{fig:abstract-domains}. Abstract variables of the form $x$ represent function parameters; abstract variables of the form $\ell.p$ represent record fields. Note that $e$, $\ell$, $x$ and $p$ only range over expressions, labels and names occurring in the program to be analysed, hence the abstract domains are finite.

For a literal $k$, let $\lceil k\rceil$ be its abstract value, that is:
\[
\begin{array}{lcll}
\lceil\kw{null}\rceil & = & \mathtt{NULL} \\
\lceil\kw{undef}\rceil & = & \mathtt{UNDEF} \\
\lceil b\rceil & = & \mathtt{BOOL} & \text{for boolean $b$} \\
\lceil n\rceil & = & \mathtt{NUM} & \text{for number $n$} \\
\lceil s\rceil & = & \mathtt{STR} & \text{for string $s$} \\
\end{array}
\]

For an abstract environment $\varrho$ and a label $\ell$ we define $\mathtt{proto}(\ell)_{\varrho}$ to be the smallest set $P\subseteq\Label$ such that $\ell\in P$ and for any $p\in P$ and $\mathtt{REC}(\ell')\in\varrho(p.\proto)$ also $\ell'\in P$.

We define three acceptability judgements $\Gamma,\varrho\models e$; $\Gamma,\varrho\models\rho$ and $\Gamma,\varrho\models\nu\approx t$ by mutual induction as shown in
Figure~\ref{fig:acceptability judgements}.

\begin{figure*}
\[
\begin{array}{lll}
\Gamma,\varrho\models k^\ell & \text{if} & \lceil k\rceil\in\Gamma(\ell) \\
\Gamma,\varrho\models x^\ell & \text{if} & \varrho(x)\subseteq\Gamma(\ell) \\
\Gamma,\varrho\models\{\overline{s:e}\}^\ell & \text{if} & \forall i.\Gamma,\varrho\models e_i \\
                                            & \text{and} & \exists\mathtt{REC}(\ell')\in\Gamma(\ell).\forall i.\Gamma(\lbl(e_i))\subseteq\varrho(\ell'.s_i) \\
\Gamma,\varrho\models(\kw{fun}(x)\{e\})^\ell & \text{if} & \Gamma,\varrho\models e \\
                                            & \text{and} & \exists \nu\in\Gamma(\ell).\Gamma,\varrho\models\nu\approx\kw{fun}(x)\{e\} \\
\Gamma,\varrho\models(t_1^{\ell_1}(t_2^{\ell_2}))^{\ell} & \text{if} & \Gamma,\varrho\models t_1^{\ell_1} \land \Gamma,\varrho\models t_2^{\ell_2} \\
                                                    & \text{and} & \forall\mathtt{FUN}(x,t_3^{\ell_3})\in\Gamma(\ell_1).\Gamma(\ell_2)\subseteq\varrho(x)\land\Gamma(\ell_3)\subseteq\Gamma(\ell) \\
\Gamma,\varrho\models(\kw{box}\>e)^\ell & \text{if} & \Gamma,\varrho\models e \\
                                       & \text{and} & \exists \nu\in\Gamma(\ell).\Gamma,\varrho\models\nu\approx\kw{box}\>e \\
\Gamma,\varrho\models(\kw{unbox}\>t^{\ell})^{\ell_0} & \text{if} & \Gamma,\varrho\models t^{\ell} \\
                                                  & \text{and} & \forall\mathtt{BOX}(t'^{\ell'})\in\Gamma(\ell).\Gamma(\ell')\subseteq\Gamma(\ell_0) \\
\Gamma,\varrho\models(\kw{run}\>t^{\ell})^{\ell_0} & \text{if} & \Gamma,\varrho\models t^{\ell} \\
                                                & \text{and} & \forall\mathtt{BOX}(t'^{\ell'})\in\Gamma(\ell).\Gamma(\ell')\subseteq\Gamma(\ell_0) \\
\Gamma,\varrho\models(\kw{run}\>t^{\ell}\>\kw{in}\>\rho)^{\ell_0} & \text{if} & \Gamma,\varrho\models t^{\ell} \land \Gamma,\varrho\models\rho \\
                                                & \text{and} & \forall\mathtt{BOX}(t'^{\ell'})\in\Gamma(\ell).\Gamma(\ell')\subseteq\Gamma(\ell_0) \\
\Gamma,\varrho\models(\kw{if}(t_1^{\ell_1})\{t_2^{\ell_2}\}\kw{else}\{t_3^{\ell_3}\})^{\ell_4} & \text{if} & \Gamma,\varrho\models t_1^{\ell_1} \land \Gamma,\varrho\models t_2^{\ell_2} \land \Gamma,\varrho\models t_3^{\ell_3} \\
                                                                                        & \text{and} & \Gamma(\ell_2)\subseteq\Gamma(\ell_4) \land \Gamma(\ell_3)\subseteq\Gamma(\ell_4) \\
\Gamma,\varrho\models(t,\rho)^\ell & \text{if} & \Gamma,\varrho\models t^\ell \land \Gamma,\varrho\models\rho \\
\Gamma,\varrho\models(t_1^{\ell_1}[t_2^{\ell_2}])^\ell & \text{if} & \Gamma,\varrho\models t_1^{\ell_1} \land \Gamma,\varrho\models t_2^{\ell_2} \\
                                              & \text{and} & \forall\mathtt{REC}(\ell')\in\Gamma(\ell_1).\forall s,\ell''\in\mathtt{proto}(\ell')_{\varrho}.\rho(\ell''.s)\subseteq\Gamma(\ell) \\
                                              & \text{and} & \mathtt{UNDEF}\in\Gamma(\ell) \\
\Gamma,\varrho\models(t_1^{\ell_1}[t_2^{\ell_2}]=t_3^{\ell_3})^\ell & \text{if} & \Gamma,\varrho\models t_1^{\ell_1} \land \Gamma,\varrho\models t_2^{\ell_2} \land \Gamma,\varrho\models t_3^{\ell_3} \\
                                                              & \text{and} & \forall s,\mathtt{REC}(\ell')\in\Gamma(\ell_1).\Gamma(\ell_3)\subseteq\varrho(\ell'.s) \\
                                                              & \text{and} & \Gamma(\ell_1)\subseteq\Gamma(\ell) \\
\Gamma,\varrho\models(\kw{del}\>t_1^{\ell_1}[t_2^{\ell_2}])^\ell & \text{if} & \Gamma,\varrho\models t_1^{\ell_1} \land \Gamma,\varrho\models t_2^{\ell_2} \\
                                                              & \text{and} & \Gamma(\ell_1)\subseteq\Gamma(\ell) \\
\Gamma,\varrho\models(\mm : t_1^{\ell_1})^\ell & \text{if} & \Gamma,\varrho\models t_1^{\ell_1} \\
                                                              & \text{and} & \Gamma(\ell_1)\subseteq\Gamma(\ell) \\
\\
\Gamma,\varrho\models\rho & \text{if} & \forall x\in\dom(\rho).\Gamma,\varrho\models\rho(x)\land\Gamma(\lbl(\rho(x)))\subseteq\varrho(x) \\
\\
\Gamma,\varrho\models\lceil k\rceil\approx k & \text{for any literal $k$} & \\
\Gamma,\varrho\models\mathtt{FUN}(x,e)\approx \kw{fun}(x)\{e\} \\
\Gamma,\varrho\models\mathtt{BOX}(e)\approx \kw{box}\>e \\
\Gamma,\varrho\models\mathtt{REC}(\ell')\approx \{\overline{s:t^\ell}\} & \text{if} & \forall i.\exists \nu_i\in\varrho(\ell'.s_i). \Gamma,\varrho\models\nu_i\approx t_i\\
\Gamma,\varrho\models\nu\approx t' & \text{if} & \Gamma,\varrho\models \nu\approx t \land t^\ell\ev{n}t'^{\ell} \\
\Gamma,\varrho\models\nu\approx(t,\rho) & \text{if} & \Gamma,\varrho\models\nu\approx t \land \Gamma,\varrho\models\rho \\
\end{array}
\]
\caption{Acceptability judgements}\label{fig:acceptability judgements}
\end{figure*}

% that's all folks
\end{document}